\newcommand{\Env}{\mathit{Env}}
\newcommand{\tletr}{{\tt letrec}}
\newcommand{\tin}{{\tt in}}
\newcommand{\wrt}{{w.r.t.\ }}
\newcommand{\eg}{{e.g.\ }}
\newcommand{\ari}{\mathit{ar}}
\newcommand{\env}{\mathit{env}}
\newcommand{\gentzen}[2]{{\displaystyle \frac{#1}{#2}}}
\newcommand{\gentzent}[3]{{\displaystyle \frac{#1}{#2}}\quad #3}
\newcommand{\dotcup}{\ensuremath{\mathaccent\cdot\cup}}
 \newcommand{\matcheq}{{\ \unlhd\ }}
\newcommand{\MBA}{\mathbb{A}}
\newcommand{\LRL}{\mathit{LRL}}
\newcommand{\LRLX}{\mathit{LRLX}}
\newcommand{\LRLXE}{\mathit{LRLXE}}
\newcommand{\FA}{\mathit{FA}}
\newcommand{\BA}{\mathit{BA}}
\newcommand{\AT}{\mathit{AT}}
\newcommand{\LA}{\mathit{LA}}
\newcommand{\tops}{\mathit{tops}}
\newcommand{\Fix}{\mathit{Fix}}
\newcommand{\Var}{\mathit{Var}}
\newcommand{\dom}{\mathit{dom}}
\newcommand{\letrecunify}{{\sc {LetrecUnify}}}
\newcommand{\letrecmatch}{{\sc {LetrecMatch}}}
\newcommand{\letrecenvmatch}{{\sc {LetrecEnvMatch}}}
\newcommand{\letrecdagmatch}{{\sc {LetrecDagMatch}}}
\newcommand{\freshdot}{{\#}}
\newcommand{\FX}{{{\cal F}^x }}
\newcommand{\flatten}{\mathit{flat}}
\newcommand{\NP}{\mathit{NP}}
\newcommand{\gen}[1]{\langle #1 \rangle}
\newcommand{\numberLetrec}{\mu_1}
\newcommand{\numberSize}{\mu_2}
\newcommand{\numberVar}{\mu_3}
\newcommand{\numberOccVar}{\mu_4}
\newcommand{\numberEqsNonX}{\mu_5}
\newcommand{\numberEquations}{\mu_6}
\title{Nominal Unification of Higher Order Expressions with Recursive Let%
\thanks{This research has been partially founded by the MINECO/FEDER projects RASO (TIN2015-71799-C2-1-P) and LoCoS (TIN2015-66293-R) and the UdG project MPCUdG2016/055.}}
\titlerunning{Nominal Unification of cyclic let}
\author{ Manfred Schmidt-Schau{\ss}\inst{1} 
   \and Temur Kutsia \inst{2} 
    \and Jordi Levy \inst{3} 
     \and Mateu Villaret\inst{4}}
   \authorrunning{M. Schmidt-Schau{\ss}, T. Kutsia, J. Levy and M. Villaret}
   \institute{GU Frankfurt, Germany, 
                 \email{schauss@ki.cs.uni-frankfurt.de}  
                 \and
                 RISC, JKU Linz, Austria,  \email{kutsia@risc.jku.at}
                 \and
                  IIIA - CSIC, Spain, \email{levy@iiia.scic.es}
                  \and
                  IMA, Universitat de Girona, Spain, 
                  \email{villaret@ima.udg.edu}}
\begin{document}
\maketitle
\begin{abstract}
A sound and complete algorithm for nominal unification of higher-order expressions with a recursive let is described,
and shown to run in non-deterministic polynomial time. 
We also explore specializations like nominal letrec-matching for plain expressions and for DAGs and determine the complexity of  
corresponding unification problems.
\end{abstract}
Keywords:  Nominal unification, lambda calculus, higher-order expressions, recursive let, operational semantics  
\section{Introduction}

Unification \cite{baadersnyder:2001}  is an operation to make two logical expressions equal by finding substitutions into variables. 
There  are numerous applications in computer science, in particular of (efficient) first-order unification, for example in automated reasoning, 
type checking and verification. 
Unification algorithms are also extended to higher-order calculi with various equivalence relations.  
If  equality includes $\alpha$-conversion and $\beta$-reduction and perhaps also $\eta$-conversion 
of a (typed or untyped) lambda-calculus,  
then unification procedures are known  (see e.g. \cite{huet:75}), however,
 the problem is undecidable \cite{goldfarb:81,levy-veanes:00}.

Our motivation comes from syntactical reasoning on higher-order expressions, with equality being alpha-equivalence of expressions, 
and where a unification algorithm is demanded as a basic service.    
Nominal unification is the extension of first-order unification with abstractions. It unifies expressions  
 \wrt alpha-equivalence, and employs permutations
 as a clean treatment of renamings. 
It is known that  nominal unification is decidable 
in exponential  time \cite{urban-pitts-gabbay:03,DBLP:journals/tcs/UrbanPG04}, where the complexity of the decision problem is   
polynomial time \cite{DBLP:journals/tcs/CalvesF08}. 
It can be seen also from a higher-order perspective \cite{Cheney05,levy-villaret:12}, as equivalent to 
Miller's higher-order pattern unification \cite{DBLP:journals/logcom/Miller91}. 
There are efficient algorithms \cite{DBLP:journals/tcs/CalvesF08,levy-villaret:10}, formalizations of nominal unification
\cite{rincon-fernandez-rocha:16}, formalizations with extensions to commutation properties within expressions 
\cite{ayala-arvalho-fernandez-nantes:16}, and generalizations of nominal unification to narrowing 
  \cite{ayala-fernandez-nantes:16}, and to equivariant (nominal) unification \cite{aoto-kikuchi:16}. 
We are interested in unification \wrt an additional extension with cyclic let.
To the best of our knowledge, there is no nominal unification algorithm for higher-order expressions permitting also general binding 
structures like a cyclic let.

The motivation and intended application scenario is as follows: constructing  syntactic reasoning algorithms for showing properties 
of program transformations on higher-order expressions  
in call-by-need functional languages (see for example \cite{moran-sands-carlsson:99,schmidt-schauss-schuetz-sabel:08})
that have a letrec-construct (also called cyclic let) \cite{ariola-klop-short:94} as in Haskell \cite{haskell2010}, 
(see \eg~\cite{cheney:2005} for a discussion on reasoning with more general name binders, and \cite{urban-kaliszyk:12} for a formalization 
of general binders in Isabelle).  
There may be applications also to coinductive extensions of  logic programming  \cite{simon-mallya-bansal-gupta:06}
 and strict functional languages \cite{jeannin-kozen-silva:12}.  
Basically, 
overlaps of expressions have to be computed (a variant of critical pairs) 
and  reduction steps (under some strategy) have to be performed.
To this end, first an expressive higher-order language is required to represent the meta-notation of expressions. 
For example, the meta-notation  $((\lambda x.e_1)~e_2)$ for a beta-reduction is made operational 
by using unification variables $X_1,X_2$ for $e_1, e_2$.   
The scoping of $X_1$ and $X_2$ is different, which can be dealt with by nominal techniques.
  In fact,
 a more powerful unification algorithm is required for meta-terms employing recursive letrec-environments. 

Our main algorithm \letrecunify\ is derived from  first-order unification and nominal unification: From first-order unification
  we borrowed the decomposition rules,  
and the sharing method from Martelli-Montanari-style unification algorithms \cite{martelli-montanari:82}. The adaptations of decomposition 
for abstractions and the advantageous use of permutations of atoms is derived from nominal unification algorithms.
Decomposing letrec-expression requires an extension by a permutation of the bindings in the environment, where, however, 
one has to take care of scoping. Since in contrast to the basic nominal unification, there are nontrivial fixpoints of permutations
(see Example \ref{example:fixpoints-possible}),  
 novel techniques are required and lead to a surprisingly moderate complexity: 
 a fixed-point shifting rule (FPS) and a redundancy removing rule (ElimFP) together bound the number of 
 fixpoint equations $X \doteq \pi{\cdot}X$ (where $\pi$ is a permutation) using techniques and results from computations in permutation groups. 
  The application of these
  techniques is indispensable (see Example \ref{example:exponential-FPS}) for obtaining efficiency.

 {\em Results}: A nominal letrec unification algorithm \letrecunify\,  
 which is complete and runs in nondeterministic polynomial time  
  (Theorem \ref{thm:unification-terminates}). The nominal letrec unification problem is NP-complete
  (Theorem \ref{thm:matching-NP-hard}).  
  Nominal letrec matching is NP-complete (Theorem \ref{thm:matching-in-NP},\ref{thm:matching-NP-hard}).  
  Nominal letrec matching for dags is in NP and outputs substitutions only (Theorem \ref{thm:nom-dag-matching}), 
  and a very restricted nominal letrec matching problem is graph-isomorphism hard 
  (Theorem \ref{thm:matching-GI-hard}).

\section{The Ground Language of Expressions}   
We define the language $\LRL$ ({\bf L}et{\bf R}ec {\bf L}anguage) of expressions, which is a lambda calculus extended with 
a recursive let construct. The notation is consistent with  \cite{urban-pitts-gabbay:03}. 
The (infinite) set of atoms $\MBA$ is a set of (constant) symbols $a,b$ denoted also with indices (the variables in lambda-calculus). 
There is a set ${\cal F}$ of function symbols with arity $\ari(\cdot)$.
The syntax of the expressions $e$ of $\LRL$ is: \\
\quad $ e   ::=   a \mid \lambda a.e \mid (f~e_1~\ldots e_{\ari(f)}) \mid (\tletr~a_1.e_1; \ldots; a_n.e_n~\tin~e)   
$
 
 We also use tuples, which are written as $(e_1,\ldots,e_n)$, and which are treated as functional expressions 
in the language.~
 We assume that binding atoms $a_1,\ldots,a_n$ in a letrec-expression $(\tletr~a_1.e_1; \ldots; a_n.e_n~\tin~e)$   are pairwise distinct. 
Sequences of bindings $a_1.e_1;\ldots; a_n.e_n~$ are abbreviated as $\env$.

The {\em scope} of atom $a$ in $\lambda a.e$  is standard: $a$ has scope $e$. 
The $\tletr$-construct  has a special scoping rule:
in $(\tletr ~a_1.s_1; \ldots;a_n.s_n~\tin~r)$, every free atom $a_i$ in some $s_j$  or $r$ is bound by the environment 
$a_1.s_1; \ldots;a_n.s_n$. 
This defines the notion of free atoms       
$\FA(e)$, bound atoms $\BA(e)$ in expression $e$, and all atoms $\AT(e)$ in $e$. 
For an environment $\env = \{a_1.e_1,\ldots,a_n.e_n\}$, we define the set of letrec-atoms as $\LA(\env) = \{a_1,\ldots,a_n\}$.  
 We say {\em $a$ is fresh for $e$} iff  $a \not\in \FA(e)$ (also denoted as $a\#e$).
As an example, the expression $(\tletr~f = \mathit{cons}~s_1~g; g = cons~ s_2~ f~\tin ~f)$ represents an infinite list      
$(cons~s_1~(cons~s_2~(cons~s_1 ~(cons~s_2~\ldots))))$, where $s_1,s_2$ are expressions. However, since our language $\LRL$ is only 
a fragment of 
core calculi  \cite{moran-sands-carlsson:99,schmidt-schauss-schuetz-sabel:08},
the reader may find more programming examples there.

We will use mappings on atoms from $\MBA$. A {\em swapping} $(a~b)$ is a function that maps an atom $a$ to atom $b$, atom $b$ to $a$,   
and is the identity on other atoms.
We will also use finite permutations on atoms from $\MBA$, which are 
  represented as a composition of swappings in the algorithms below. 
Let $\dom(\pi) = \{a \in \MBA \mid \pi(a) \not= a\}$. Then every finite permutation can be represented by a composition of 
at most $(|\dom(\pi)|- 1)$ swappings. 
Composition  $\pi_1 \circ \pi_2$ and inverses $\pi^{-1}$ can be immediately computed. 
Permutations $\pi$ operate on expressions simply by recursing on the structure.   
For a  letrec-expression this is
$\pi\cdot(\tletr ~a_1.s_1; \ldots;a_n.s_n~\tin~e)$ $=$ 
   $(\tletr ~\pi\cdot a_1.\pi\cdot s_1; \ldots;\pi\cdot a_n.\pi\cdot s_n;~\tin~\pi\cdot e)$. 
 Note that permutations also change names of bound atoms.  

We will use the following definition of $\alpha$-equivalence:
\begin{definition}\label{def:alpha-equivalence-sim}
The equivalence $\sim$ on expressions $e \in \LRL$ is defined as follows:
\begin{itemize}
  \item $a \sim a$.  
  \item if $e_i \sim e_i'$ for all $i$, then $f e_1 \ldots e_n \sim f e_1' \ldots e_n'$ for an $n$-ary  $f \in {\cal F}$.
  \item If $e \sim e'$, then $\lambda a.e \sim \lambda a.e'$.
  \item If for $a \not= b$,  $a\#e'$, $e \sim (a~b)\cdot e'$, then $\lambda a.e \sim \lambda b.e'$. 
 \item $\tletr~a_1.e_1; \ldots; a_n.e_n~\tin~e_0 \sim  \tletr~a_1'.e_1'; \ldots; a_n'.e_n'~\tin~e_0'$ iff
   there is some permutation $\rho$ on $\{1,\ldots,n\}$, such that 
      $\lambda a_1.\ldots.\lambda a_n . (e_1,\ldots,e_n,e_0) \sim 
            \lambda a'_{\rho(1)}.\ldots.\lambda a'_{\rho(n)} . (e'_{\rho(1)},\ldots,e'_{\rho(n)},e'_0)$.   \qed
\end{itemize}
\end{definition}
Note that $\sim$ is identical to the equivalence relation generated by $\alpha$-equivalence of binding constructs and permutation of bindings in a 
letrec. 
 
We need fixpoint sets of permutations $\pi$: We define $\Fix(\pi) = \{e\mid \pi\cdot e \sim e\}$.
In usual nominal unification, these sets can be characterized by using freshness constraints  \cite{urban-pitts-gabbay:03}. 
Clearly, all these sets and also all finite intersections are nonempty, 
since at least fresh atoms are elements and since $\MBA$ is infinite.
However, in our setting, these sets are nontrivial:
\begin{example}\label{example:fixpoints-possible}   The $\alpha$-equivalence  $(a~b)\cdot (\tletr~c.a;d.b~\tin ~\mathit{True})$ $\sim$ 
$(\tletr~c.a;d.b~\tin ~\mathit{True})$ holds, which means that there
   are expressions $t$ in $\LRL$ with  $t \sim (a~b)\cdot t$ and  $\FA(t) = \{a,b\}$.
\end{example}

  In the following we will use the results on complexity of operations in permutation groups, see
   \cite{luks:91}, and  \cite{Furst-Hopcroft-Luks:80}. We consider a set $\{a_1,\ldots,a_n\}$ of distinct objects
   (in our case the atoms),
   the symmetric group $\Sigma(\{a_1,\ldots,a_n\})$ (of size $n!$) of permutations of the objects, 
   and consider its elements, subsets and subgroups. Subgroups
   are always represented by a set of generators. 
   If $H$ is a set of elements (or generators), then $\gen{H}$ denotes the generated subgroup.  
  Some facts are: 
    \begin{itemize}
      \item Permutations can be represented in space linear in $n$.
      \item Every subgroup of $\Sigma(\{a_1,\ldots,a_n\})$ can be represented by   $\le n^2$ generators.
    \end{itemize}
 However, elements in a subgroup may not be representable as a product of polynomially many generators.
 
 \noindent  The following questions can be answered in polynomial time: 
   \begin{itemize}
     \item The element-question: $\pi \in G$?, 
     \item The subgroup question: $G_1 \subseteq G_2$.  
     \end{itemize} 
   
   However, intersection of groups and set-stabilizer 
    (i.e. $\{\pi \in G \mid \pi(M) = M\}$)
   are not known to be computable in polynomial time, since those problems are as hard as graph-isomorphism (see  \cite{luks:91}).

\section{A Nominal Letrec Unification Algorithm}

As an extension of  $\LRL$, there is also a countably infinite  set of (unification) variables $X,Y$ also denoted perhaps using indices.
The syntax of the language $\LRLX$ ({\bf L}et{\bf R}ec {\bf L}anguage e{\bf X}tended) is
\[\begin{array}{lcl} 
 e & ::=&  a \mid X \mid \pi\cdot{}X \mid \lambda a.e \mid (f~e_1~\ldots e_{\ari(c)})~| ~(\tletr~a_1.e_1; \ldots; a_n.e_n~\tin~e)   
    \end{array}
\] 
$\Var$ is the set of variables and $\Var(e)$ is the set of variables $X$ occurring in $e$.

The expression $\pi{\cdot}e$ for a non-variable $e$ means an operation, which is performed by shifting $\pi$ down, 
using the simplification $\pi_1 {\cdot} (\pi_2 {\cdot} X)$ $\to $ $(\pi_1 \circ  \pi_2) {\cdot}X$, apply it to atoms,
where only  expressions $\pi\cdot{}X$ remain, which are called {\em suspensions}.

A {\em freshness constraint} in our unification algorithm is of the form $a{\freshdot}e$, where $e$ is an $\LRLX$-expression, and an
 {\em atomic} freshness constraint is of the form $a{\freshdot}X$.

 \begin{definition}[{Simplification of Freshness Constraints}]\label{def:simplification-freshness}~ 
 
\begin{flushleft} $\inferrule{\{a{\freshdot}b\} \dotcup \nabla}{\nabla}$ \quad   
$\inferrule{\{a{\freshdot}(f~s_1 \ldots s_n)\} \dotcup \nabla}{\{a{\freshdot}s_1,\ldots,a{\freshdot}s_n\} \dotcup \nabla}$  \quad 
$\inferrule{\{a{\freshdot}(\lambda a.s)\} \dotcup \nabla}{\nabla}$ \quad 
 $\inferrule{\{a{\freshdot}(\lambda b.s)\} \dotcup \nabla}{\{a{\freshdot}s\} \dotcup \nabla}$~~~  \\[2mm]  
 $\inferrule*[right=\text{\normalfont ~\parbox{2.7cm}{if $a \in \{a_1,\ldots,a_n\}$}}]
     {\{a{\freshdot}(\tletr~a_1.s_1;\ldots,a_n.s_n~\tin~r)\} \dotcup \nabla}{\nabla}$  
 \\[2mm]
 $\inferrule*[right=\text{\normalfont  \parbox{2.7cm}{if $a \not\in \{a_1,\ldots,a_n\}$}}]
   {\{a{\freshdot}(\tletr~a_1.s_1;\ldots,a_n.s_n~\tin~r)\} \dotcup \nabla}
      {\{a{\freshdot}s_1,\ldots a{\freshdot}s_n,a{\freshdot}r\} \dotcup \nabla}$  ~~~ 
  $\inferrule*{\{a{\freshdot}(\pi\cdot X)\} \dotcup \nabla}{\{\pi^{-1}(a) {\freshdot} X\} \dotcup \nabla}$  
\end{flushleft}
  \end{definition}

\begin{definition}
An $\LRLX$-unification problem is a pair $(\Gamma, \nabla)$, where  $\Gamma$ is a set of equations $s_1 \doteq t_1,\ldots,s_n \doteq t_n$,
 and $\nabla$ is a set of freshness constraints.
A {\em (ground) solution} of $(\Gamma, \nabla)$  is a substitution $\rho$ (mapping variables in $\Var(\Gamma, \nabla)$ to ground expressions), 
 such that $s_i\rho \sim t_i\rho$ for $i = 1,\ldots,n$ 
and for all $a{{\freshdot}}e \in \nabla$:  $a \not\in \FA(e\rho)$ holds. \\
The decision problem is whether there is a solution for given $(\Gamma, \nabla)$.
\end{definition}

\begin{definition}
Let  $(\Gamma, \nabla)$ be an $\LRLX$-unification problem. We consider triples $(\sigma,\nabla',{\cal X})$, where $\sigma$ is 
 a substitution (compressed as a dag) 
mapping variables to $\LRLX$-expressions,
$\nabla'$ is a set of freshness constraints, and  ${\cal X}$ is a set of fixpoint constraints of the form $X \in \Fix(\pi)$,
where $X \not\in \dom(\sigma)$. 
A triple $(\sigma,\nabla',{\cal X})$ is a {\em unifier} of $(\Gamma, \nabla)$, if
  (i) there exists a ground substitution $\rho$ that solves $(\nabla'\sigma,{\cal X})$, i.e.,
for every $a{\#}e$ in $\nabla'$, $a{\#}e\sigma\rho$ is valid, and for every $X \in \Fix(\pi)$ in ${\cal X}$, $X\rho \in \Fix(\pi)$;
and (ii) for every ground substitution $\rho$ that instantiates all variables in $Var(\Gamma, \nabla)$ 
  which solves $(\nabla'\sigma,{\cal X})$,
   the ground substitution $\sigma\rho$ is a solution of  $(\Gamma, \nabla)$. 
   A set $M$ of unifiers is {\em complete}, if every solution $\mu$ is covered by at least one unifier,
   i.e. there is some unifier $(\sigma,\nabla',{\cal X})$ in $M$, and a ground substitution $\rho$, 
   such that $X\mu \sim X\sigma\rho$ for all $X \in \Var(\Gamma, \nabla)$.  
     \qed
\end{definition}

We will employ nondeterministic rule-based algorithms computing unifiers: There is a  clearly indicated subset of disjunctive (don't know non-deterministic) rules.
The {\em collecting variant} of the algorithm  runs  and collects all solutions from all alternatives of the disjunctive rules. 
The {\em decision variant} guesses one possibility and tries to compute 
a single unifier.

 Since we want to avoid the exponential size explosion of the  Robinson-style unification algorithms, keeping  the good properties 
 of Martelli Montanari-style unification algorithms \cite{martelli-montanari:82}, but not their notational overhead,
 we stick to  a set of equations as data structure. 
 As a preparation for the algorithm,  
all expressions in equations are exhaustively flattened as follows:
$(f~t_1 \ldots  t_n) \to (f~X_1 \ldots X_n)$ plus the equations $X_1 \doteq t_1,\ldots,X_n \doteq t_n$. 
Also $\lambda a.s$ is replaced by $\lambda a.X$  with equation $X \doteq s$, 
and $(\tletr ~a_1.s_1;\ldots,a_n.s_n~\tin~r)$ is replaced by  $(\tletr~ a_1.X_1;\ldots,a_n.X_n~\tin~X)$ with the additional equations 
$X_1 \doteq s_1; \ldots ;X_n \doteq s_n;X \doteq r$. The introduced variables are always fresh ones. 
We may denote the resulting set of equations of flattening an equation $\mathit{eq}$ as $\flatten(\mathit{eq})$.
Thus, all expressions in equations are of depth at most 1, where we do not count the permutation 
applications in the suspensions.

 A  dependency  ordering on $\Var(\Gamma)$ is required: 
 If $X  \doteq e$ is in $\Gamma$, and $e$ is not a variable nor a suspension and $X \not= Y \in \Var(e)$, then $X \succ_{vd} Y$,
 Let $>_{vd}$ be the transitive closure of $\succ_{vd}$.
 This  ordering is only used, if  no standard rules and no failure rules (see Definition \ref{def:failure-rules}) apply, 
 hence there are no cycles.

\subsection{Rules of the Algorithm \letrecunify}

\letrecunify\  operates on a tuple $(\Gamma, \nabla, \theta)$, where 
$\Gamma$ is a set of flattened equations $e_1 \doteq e_2$, where we assume that  $\doteq$ is symmetric, 
$\nabla$ contains freshness constraints,  
$\theta$ represents the already computed substitution as a list of replacements of the form $X \mapsto e$. Initially $\theta$  is empty.
The final state will be reached, i.e. the output,  when $\Gamma$  only contains fixpoint equations of the form $X \doteq \pi{\cdot}X$ that are 
non-redundant, and the rule (Output) fires.  

In the notation of the rules, we use $[e/X]$  as substitution that replaces $X$ by $e$. In the rules, we  may omit $\nabla$ or $\theta$ if they are not changed. 
We will use a  notation ``$|$'' in the consequence part of one rule, perhaps with a set of possibilities, to denote 
 disjunctive  (i.e. don't know) nondeterminism.  The only nondeterministic rule that requires exploring all alternatives is rule (7) below. 
 The other rules can be applied in any order, where it is not necessary to explore alternatives.  

 \paragraph*{Standard (1,2,3,3') and decomposition rules (4,5,6,7):} 

\newcounter{rules}

\begin{flushleft}
 $(1)~\gentzen{\Gamma \dotcup\{e \doteq e\}}{\Gamma}$  \qquad  
  $(2)~\gentzen{\Gamma \dotcup\{\pi\cdot X \doteq s\}~~s \not\in \Var}{\Gamma \dotcup\{ X \doteq \pi^{-1}\cdot s\}}$     
  \\[2mm]
  (3)$\gentzen{\Gamma \dotcup\{X \doteq \pi{\cdot}Y\},\nabla, \theta \qquad X \not= Y}{\Gamma[\pi{\cdot}Y/X],  \nabla[\pi{\cdot}Y/X], \theta \cup \{X \mapsto \pi{\cdot}Y\}}$ 
 \quad 
   (3')$\gentzen{\Gamma \dotcup\{X \doteq  Y\},\nabla, \theta \qquad X \not= Y}{\Gamma[Y/X],  \nabla[Y/X], \theta \cup \{X \mapsto Y\}}$ \\[1mm]

$(4)~ \gentzen{\Gamma\dotcup(f~s_1 \ldots s_n) \doteq (f~s_1' \ldots s_n')\}}   
   {\Gamma \dotcup \{s_1 \doteq s_1',\ldots,s_n \doteq s_n'\}}$\\[2mm]        

$(5)~ \gentzen{\Gamma\dotcup(\lambda a.s \doteq \lambda a.t\}}   
      {\Gamma \dotcup \{s \doteq t\}}$    
\qquad 
     $(6)~ \gentzen{\Gamma\dotcup(\lambda a.s \doteq \lambda b.t\},\nabla}
      {\Gamma \dotcup \{s \doteq (a~b){\cdot}t\} ,\nabla \cup  \{a {\freshdot} t\}}$ 
     \\[2mm]

$(7)~ \gentzen{\Gamma \dotcup\{\tletr ~a_1.s_1;\ldots,a_n.s_n~\tin~r \doteq \tletr ~b_1.t_1;\ldots,b_n.t_n~\tin~r' \}}  
         {\mathop{|}\limits_{\forall \rho}~\Gamma \dotcup \flatten(\lambda a_1.\ldots \lambda a_n.(s_1,\ldots,s_n,r) \doteq 
          \lambda b_{\rho(1)}.\ldots \lambda b_{\rho(n)}. (t_{\rho(1)},\ldots,t_{\rho(n)},r'))}$\\[2mm]    
       where $\rho$ is a  permutation on $\{1,\ldots,n\}$. \\
 \end{flushleft}
 \paragraph*{Main Rules:} 
   The following rules (MMS) (Martelli-Montanari-Simulation) and (FPS) (Fixpoint-Shift) will always be immediately followed by a decomposition of 
   the resulting set of equations. \vspace{0.3cm}
 \begin{flushleft}
\mbox{(MMS)} $\gentzen{\Gamma\dotcup \{X \doteq e_1, X \doteq e_2\},\nabla}
   {\Gamma \dotcup  \{X \doteq e_1, e_1 \doteq e_2\}, \nabla}$,~~~
   \begin{minipage}{0.4 \textwidth} if  $e_1, e_2$ are neither variables\\  nor suspensions.
   \end{minipage}
     \\[2mm]
  
  \mbox{(FPS)}~$\gentzen{\Gamma\dotcup \{X \doteq \pi_1{\cdot}X, \ldots,X \doteq \pi_n{\cdot}X, X \doteq e\}, \theta}
   {\Gamma \dotcup  \{e \doteq \pi_1{\cdot}e, \ldots,e \doteq \pi_n{\cdot}e\},  \theta \cup \{X \mapsto e\}}$,   
    \begin{minipage}{0.37 \textwidth}  if $X$ is maximal \wrt $ >_{vd}$, 
     $X \not\in \Var(\Gamma)$,
   and $e$ is neither a variable nor a suspension, and no failure rule (see below) is applicable. 
   \end{minipage}
   \\[2mm]

    \mbox{(ElimFP)}~~$\gentzen{\Gamma\dotcup \{X \doteq \pi_1{\cdot}X, \ldots,X \doteq \pi_n{\cdot}X, X \doteq \pi{\cdot}X\}, \theta}
    {\Gamma\dotcup \{X \doteq \pi_1{\cdot}X, \ldots,X \doteq \pi_n{\cdot}X\}, \theta},$   
    if $\pi \in \gen{\pi_1,\ldots,\pi_n}.$ 
   \\[2mm]
   
    \mbox{(Output)}~~$\gentzen{\Gamma, \nabla, \theta}
    { \theta, \nabla, \{``X \in \Fix(\pi)"~|~X \doteq \pi\cdot X \in \Gamma\}}$ \quad  
    \begin{minipage}{0.3\textwidth} if $\Gamma$ only consists \\ of fixpoint-equations.  \end{minipage}
 \end{flushleft}
   
   We assume that the rule \mbox{(ElimFP)} will be applied whenever possible. 
   
     Note that the two rules (MMS) and (FPS), without further precaution, may cause an exponential blow-up in the number of fixpoint-equations.
  The rule (ElimFP) will limit the number of fixpoint equations by exploiting knowledge on operations on permutation groups.   
  
  The rule (Output) terminates an execution on $\Gamma_0$ by outputting a unifier $(\theta,\nabla',{\cal X})$.  
  Note that in any case at least one solution is represented:

\vspace*{1em}
The top symbol of an expression is defined as $\tops(X) = X$, $\tops(\pi{\cdot}X) = X$, $\tops(f~s_1 \ldots s_n) = f$,  $\tops(a) = a$,
 $\tops(\lambda a.s) = \lambda$,  
 $\tops(\tletr~\env~\tin~s) = \tletr$. Let $\FX := {\cal F} \cup \MBA \cup\{\tletr, \lambda\}$.

\begin{definition}{\em Failure Rules of \letrecunify} \label{def:failure-rules}
\begin{description}
\item[Clash Failure:] If $s \doteq t \in \Gamma$,  $\tops(s) \in\FX$, $\tops(t) \in \FX$,  but $\tops(s) \not= \tops(t)$.

 \item[Cycle Detection:] If there are equations $X_1 \doteq s_1, \ldots, X_n \doteq s_n$ where $\tops(s_i) \in \FX$,   
 and $X_{i+1}$ occurs in $s_i$ for $i = 1,\ldots,n-1$ and $X_1$ occurs in $s_n$. 
 
 \item[Freshness Fail:] If there is a freshness constraint $a{\freshdot} a$.  
 
 \item[Freshness Solution Fail:] If there is a freshness constraint $a{\freshdot} X \in \nabla$, and $a \in \FA((X) \theta)$.  
 \end{description}
 \end{definition}
 The computation of  $\FA((X) \theta)$ can be done in polynomial time by iterating over the solution components.

\begin{example}
We illustrate the letrec-rule by a ground example  without flattening. Let the equation be:
 \[\tletr~a.(a,b), b.(a,b)~\tin ~  b \doteq \tletr~b.(b,c), c.(b,c)~\tin ~  c).\]
Select the identity permutation $\rho$, which results in:
\begin{alignat*}{1}
 & \lambda a.\lambda b. ((a,b), (a,b), b) \doteq \lambda b.\lambda c. ((b,c), (b,c),c).\quad \text{Then:}\\
 & \lambda b. ((a,b), (a,b), b) \doteq (a~b){\cdot}\lambda c.((b,c), (b,c),c) =\lambda c.((a,c), (a,c),c).
 \end{alignat*}
(The freshness constraint $a{\freshdot}\ldots$ holds).
Then the application of the  $\lambda$-rule gives $((a,b), (a,b), b) \doteq (b~c){\cdot}((a,c), (a,c),c)$ (the freshness constraint $b{\freshdot}\ldots$ holds).
The resulting equation is
  $((a,b), (a,b), b) \doteq ((a,b), (a,b),b),$ which 
 obviously holds. 
\end{example}

\begin{example}\label{example:exponential-FPS}
This example shows that  FPS (together with the standard and decomposition rules) may give rise to an exponential number of equations on the
size of the original problem.
Let there be variables $X_i, i = 0,\ldots,n$ and the equations 
$\Gamma =  
\{X_n \doteq \pi{\cdot}X_n,$ $X_n \doteq (f~X_{n-1}~\rho_n{\cdot}X_{n-1}),\ldots,  X_2 \doteq  (f~X_{1}~\rho_{2}{\cdot}X_{1})  
\}
$  
 where $\pi,\rho_1,\ldots,\rho_n$ are permutations.

We prove that this unification problem may 
give rise to $2^{n-1}$ many equations, if the redundancy rule (ElimFP) is not there. \\
\begin{tabular}{ll}
The first step is by (FPS): &\hspace*{0.5mm}
\begin{minipage}{0.6\textwidth}
$$ 
\left\{
\begin{array}{rcl}f~X_{n-1}~\rho_n{\cdot}X_{n-1} &\doteq& \pi{\cdot}(f~X_{n-1}~\rho_n{\cdot}X_{n-1}),  \\
    X_{n-1} &\doteq& (f~X_{n-2}~\rho_{n-1}{\cdot}X_{n-2}),  \ldots
    \end{array}
    \right\}
$$ 
\end{minipage}
\end{tabular}\\
 \begin{tabular}{ll}
Using decomposition and inversion:&\hspace*{2mm}
\begin{minipage}{0.5\textwidth}
$$
\left\{
\begin{array}{rcl} X_{n-1}  &\doteq& \pi{\cdot} X_{n-1}, \\
   X_{n-1} &\doteq&\rho_n^{-1}{\cdot}\pi{\cdot} \rho_n{\cdot}X_{n-1}, \\
       X_{n-1} &\doteq &( f~X_{n-2}~\rho_{n-1}{\cdot}X_{n-2}), \ldots
         \end{array}
      \right\}
$$ 
\end{minipage}
\end{tabular}
 \begin{tabular}{ll}
After (FPS)&
\begin{minipage}{0.68\textwidth}
$$
\left\{
\begin{array}{@{}rcl@{}} (f~X_{n-2}~\rho_{n-1}{\cdot}X_{n-2})&\doteq& \pi{\cdot} (f~X_{n-2}~\rho_{n-1}{\cdot}X_{n-2}), \\
     (f~X_{n-2}~\rho_{n-1}{\cdot}X_{n-2})& \doteq&\rho_n^{-1}{\cdot}\pi{\cdot} \rho_n{\cdot}(f~X_{n-2}~\rho_{n-1}{\cdot}X_{n-2}), \\
       X_{n-2} &\doteq &(f~X_{n-3}~\rho_{n-2}{\cdot}X_{n-3}), \ldots\\
         \end{array}
         \right\}
$$ 
\end{minipage}
\end{tabular}\\
 \begin{tabular}{ll}
 decomposition and inversion:&
\begin{minipage}{0.58\textwidth}
$$
\left\{
\begin{array}{rcl}  X_{n-2}   &\doteq& \pi{\cdot}  X_{n-2}, \\
   X_{n-2}  &\doteq& \rho_{n-1}^{-1}{\cdot}\pi{\cdot}  \rho_{n-1}{\cdot}X_{n-2}, \\
     X_{n-2}  &\doteq &\rho_n^{-1}{\cdot}\pi{\cdot} \rho_n{\cdot} X_{n-2}, \\
       X_{n-2} & \doteq &\rho_{n-1}^{-1}{\cdot}\rho_n^{-1}{\cdot}\pi{\cdot} \rho_n{\cdot} \rho_{n-1}{\cdot}X_{n-2}, \\
       X_{n-2} &\doteq& (f~X_{n-3}~\rho_{n-2}{\cdot}X_{n-3}), \ldots\\
         \end{array}
  \right\}
$$ 
\end{minipage}
\end{tabular}
 Now it is easy to see that all equations $X_1 \doteq \pi'{\cdot}X_1$ are generated,  with $\pi' \in \{\rho^{-1}\pi\rho$ where $\rho$ is a composition
 of a subsequence of $\rho_n,\rho_{n-1},\ldots, \rho_2\}$, which makes $2^{n-1}$ equations. 
 The permutations are pairwise different using an appropriate choice of $\rho_i$ and $\pi$. 
 The starting equations can be constructed  using the decomposition rule of abstractions.
\end{example}

   \section{Soundness, Completeness, and Complexity of \letrecunify}

 \begin{theorem}\label{thm:unification-terminates}
 The decision variant of the algorithm \letrecunify\ runs in nondeterministic polynomial time.  
   Its collecting version returns a complete set of at most exponentially many unifiers,
  every one represented in polynomial space.   
 \end{theorem}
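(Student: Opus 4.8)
The plan is to establish the three assertions separately: soundness (every computed triple $(\theta,\nabla',{\cal X})$ is a unifier), step-wise completeness together with the unifier count, and the complexity bounds. Soundness and the step-wise completeness I would prove rule by rule, checking that each transformation preserves the solution set in the appropriate direction. The decomposition rules $(4)$--$(6)$ and the standard rules $(1)$--$(3')$ are the familiar first-order/nominal cases; the only genuinely new step is the letrec rule $(7)$, whose correctness is exactly Definition~\ref{def:alpha-equivalence-sim}: a solution must agree on the two environments up to some permutation $\rho$ of the bindings, so soundness holds for each branch and completeness holds because the disjunction ranges over all $\rho$. For (FPS) I would use that replacing $X$ by $e$ and turning each $X \doteq \pi_i{\cdot}X$ into $e \doteq \pi_i{\cdot}e$ is solution-preserving, and for (ElimFP) that $\pi \in \gen{\pi_1,\ldots,\pi_n}$ makes the dropped constraint $X \in \Fix(\pi)$ a consequence of the retained ones.

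Second, for the decision variant I would exhibit a termination measure: a lexicographically ordered tuple assembled from the quantities $\numberLetrec,\ldots,\numberEquations$ (number of letrec-occurrences, total size, number of variables, number of variable occurrences, number of non-fixpoint equations, number of equations). The intended behaviour is that rule $(7)$ strictly decreases $\numberLetrec$ and sits at the top of the order; the variable-eliminating rules $(3),(3')$ and (FPS) strictly decrease $\numberVar$; the decomposition and (MMS) steps decrease $\numberSize$ and $\numberOccVar$ without increasing the earlier components; and (ElimFP) decreases $\numberEquations$. Since each component is polynomially bounded in the input size, any derivation has polynomially many steps, and together with a dag-representation of $\theta$ and the linear-size representation of permutations this yields polynomial space, hence the decision variant runs in nondeterministic polynomial time (the single don't-know choice being the guess of $\rho$ in rule $(7)$).

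The hard part will be bounding the number of retained fixpoint equations, which Example~\ref{example:exponential-FPS} shows can blow up to $2^{n-1}$ if (ElimFP) is omitted. Here I would invoke the permutation-group machinery: all permutations occurring live in $\Sigma(\AT(\Gamma,\nabla))$, a symmetric group on $m = |\AT(\Gamma,\nabla)|$ atoms. For a fixed variable $X$, eager application of (ElimFP) guarantees that the kept constraints $X \doteq \pi_1{\cdot}X,\ldots,X \doteq \pi_k{\cdot}X$ are \emph{irredundant}, i.e. no $\pi_i$ lies in $\gen{\pi_1,\ldots,\pi_{i-1},\pi_{i+1},\ldots,\pi_k}$. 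Consequently the chain $\gen{\pi_1} \subsetneq \gen{\pi_1,\pi_2} \subsetneq \cdots$ is strictly increasing, so each step at least doubles the group order; since that order is bounded by $m!$, we get $k \le \log_2(m!) = O(m\log m)$. Thus each variable carries only polynomially many fixpoint equations, and because the element-question $\pi \in \gen{\pi_1,\ldots,\pi_k}$ is decidable in polynomial time, (ElimFP) is effective and keeps every intermediate state of polynomial size.

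Finally, for the collecting variant I would bound both the number of leaves of the don't-know search tree and the size of each output. The only branching is rule $(7)$; along any path there are at most $\numberLetrec \le s$ such applications (each strictly reducing $\numberLetrec$), and an application to an environment with $n$ bindings has $n! \le 2^{s\log s}$ alternatives, so the tree has at most $2^{\operatorname{poly}(s)}$ leaves, i.e. exponentially many unifiers. By the bounds above each output $(\theta,\nabla',{\cal X})$ is polynomial: $\theta$ as a dag, $\nabla'$ as atomic freshness constraints, and ${\cal X}$ bounded by $\numberVar \cdot O(m\log m)$ fixpoint constraints. Completeness of the collecting variant then follows from the step-wise completeness of all rules, the exhaustiveness of the $\rho$-disjunction in rule $(7)$, and the (Output) rule reading off $(\theta,\nabla',{\cal X})$ exactly once only fixpoint equations remain.
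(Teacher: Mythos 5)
Your overall route coincides with the paper's: the same lexicographic measure built from $\numberLetrec,\ldots,\numberEquations$, the same bundling of the group-theoretic facts (eager (ElimFP) keeps an irredundant generating set, each new generator at least doubles the subgroup order, the order is at most $m!$, hence $O(m\log m)$ fixpoint equations per variable), and the same accounting for the collecting variant (branching only at rule (7), dag-compressed $\theta$). So in structure this is the paper's proof.

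There is, however, one concrete misstep in the termination accounting. You place (FPS) among the rules that ``strictly decrease $\numberVar$''. It does not: although $X$ is eliminated, (FPS) replaces $X \doteq e$ by $e \doteq \pi_1{\cdot}e,\ldots,e \doteq \pi_n{\cdot}e$, and when $e$ is a letrec-expression the mandatory subsequent decomposition by rule (7) flattens the resulting $\lambda$-prefixed tuples and thereby introduces \emph{fresh} variables --- up to $O(S^2\log S)$ of them in a single (FPS) step, since $X$ may carry $O(S\log S)$ fixpoint equations and each copy of $e$ has size $O(S)$. (A similar caveat applies to (MMS), which by itself duplicates $e_1$ and so increases $\numberSize$; only the combination with the immediately following decomposition decreases the measure.) The paper therefore credits (MMS) and (FPS), bundled with their decompositions, with a strict decrease of the prefix $(\numberLetrec,\numberSize)$, and then bounds $\numberVar$ globally by a separate count: rule (7) adds at most $S$ variables, (FPS) at most $c_1 S^2\log S$, and each of these rules fires at most $S$ times, giving $\numberVar = O(S^3\log S)$. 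Without some such global bound, your sentence ``each component is polynomially bounded in the input size'' is precisely the claim that needs proving rather than a premise, and the incorrect monotonicity claim for $\numberVar$ hides the issue instead of resolving it. The rest of your argument stands once this is repaired.
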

 \begin{proof} Note that we assume that the input equations are flattened before applying the rules, which can be performed in polynomial time.
 
  Let $\Gamma_0,\nabla_0$ be the  input, and let $S = \mathit{size}(\Gamma_0,\nabla_0)$.
The execution of a single rule can be done in polynomial time depending on the size of the intermediate state, thus we  have to show that
the size of the intermediate states remains polynomial and that the number of rule applications is at most polynomial. 

 The termination measure $(\numberLetrec,\numberSize,\numberVar,\numberOccVar,\numberEqsNonX,\numberEquations)$, which is ordered lexicographically,  
 is as follows: 
 $\numberLetrec$ is the number of letrec expressions in $\Gamma$,
 $\numberSize$ is the number of letrec-, $\lambda$-symbols, function-symbols and atoms in $\Gamma$, 
 $\numberVar$ is the number of different variables in $\Gamma$,
 $\numberOccVar$ is the number of occurrences of variables in $\Gamma$,
 $\numberEqsNonX$ is the number of equations not of the form $X \doteq e$, and 
 $\numberEquations$ is the number of equations. 
 
Since shifting permutations down and simplification of freshness constraints both terminate and do not increase the measures, we
only compare states which are normal forms for shifting down permutations and simplifying freshness constraints.  
We assume that the algorithm stops if a failure rule is applicable, and 
 that the  rules (MMS) and (FPS) are immediately followed by a full decomposition of the results  (or failure).
 
 Now it is easy to check that the rule applications strictly decrease $\mu$:
 The rules (MMS) and (FPS) together with the subsequent decomposition strictly decrease $(\numberLetrec,\numberSize)$.
 Since expressions in equations are flat, (MMS) does not increase the size: 
  $X \doteq s_1, X \doteq s_2$ is first replaced by  $X \doteq s_1, s_1 \doteq s_2$, and the latter is decomposed, 
  which due to flattening results only in equations containing variables and suspensions. Thus $\numberSize$  is reduced by  the size of $s_2$.
  In the same way (FPS)  strictly decreases $(\numberLetrec,\numberSize)$. In addition $\numberSize$ is at most $S^2$, since only the letrec-decomposition
  rule can add $\lambda a.$-constructs. 
  
   The number of fixpoint-equations for every variable $X$ is at most $c_1*S*log(S))$ for some (fixed) $c_1$, 
   since the  number of atoms is never increased,
   and 
   since we assume that \mbox{(ElimFP)}  is applied whenever possible. 
   The size of the permutation group is at most 
   $S!$, and so the length of proper subset-chains and hence the maximal number of generators of a subgroupp is at most 
   $\log(S!) = O(S*log(S))$. 
     Note that the redundancy of generators  can be tested in polynomial time depending on the number of atoms.

   Now we prove a (global) upper bound  on the number $\numberVar$ of variables: 
    An application of (7)  may increase $\numberVar$ 
   at most by $S$.
   An application of (FPS) may increase this number at  most by $c_1*S \log(S) *S$,
   where the worst case occurs when $e$ is a letrec-expression. 
   Since  (MMS) and (FPS) can be applied at most $S$ times,
   the number of variables is smaller than $c_1*S^3\log(S)$. 
   
   The other rules strictly decrease $(\numberLetrec,\numberSize)$,  or they do not increase $(\numberLetrec,\numberSize)$, 
   and  strictly decrease
   $(\numberVar,\numberOccVar,\numberEqsNonX,\numberEquations)$ and can be performed in polynomial time.   \qed
 \end{proof}

 The problematic rule for complexity is (FPS), which does not increase $\numberLetrec$ and  
 $\numberSize$, but may increase $\numberVar$, $\numberOccVar$ and $\numberEquations$ (see Example \ref{example:exponential-FPS}).
 This increase is defeated by the rule (ElimFP), which helps to keep  the numbers  $\numberOccVar$ and $\numberEquations$ low.

 \begin{theorem}\label{thm:unification-sound-and-complete}
  The algorithm \letrecunify\  is sound and complete.    
  \end{theorem}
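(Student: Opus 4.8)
The plan is to prove soundness and completeness separately, arguing rule-by-rule over the nondeterministic transition system, and to treat the two directions as preservation properties of the solution set under each rule application.

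\medskip

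\noindent\textbf{Soundness.} First I would show that every unifier output by the (Output) rule is in fact a unifier of the original problem $(\Gamma_0,\nabla_0)$. The natural strategy is to prove a one-step invariant: if $(\sigma,\nabla',\mathcal{X})$ is a unifier of a state $(\Gamma',\nabla',\theta')$ reachable after one rule application from $(\Gamma,\nabla,\theta)$, then the corresponding triple is a unifier of $(\Gamma,\nabla,\theta)$. Running this backwards from the terminal state established by (Output) then yields soundness. For each rule I would check that any ground solution $\rho$ of the consequent state, composed appropriately with the recorded substitution, solves the antecedent. The decomposition rules (4), (5), (6), (7) are justified directly by the clauses of Definition~\ref{def:alpha-equivalence-sim}; in particular rule (6) relies on the $\lambda a.e \sim \lambda b.e'$ clause, so I would verify that adding $a{\freshdot}t$ and replacing $t$ by $(a~b){\cdot}t$ exactly matches that clause, and rule (7) on the $\tletr$ clause with the permutation $\rho$ on indices. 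The variable-elimination rules (3), (3'), (MMS) preserve solutions because substituting equals for equals does not change $\sim$. The subtle cases are (FPS) and (ElimFP): for (FPS) I must argue that imposing $e \doteq \pi_i{\cdot}e$ together with $X \mapsto e$ is equivalent to imposing $X \doteq \pi_i{\cdot}X$ for a ground $\rho$, i.e. $X\rho \in \Fix(\pi_i)$ translates to $e\rho \in \Fix(\pi_i)$; for (ElimFP) I would use that $\pi \in \gen{\pi_1,\ldots,\pi_n}$ implies $\Fix(\pi_1)\cap\cdots\cap\Fix(\pi_n) \subseteq \Fix(\pi)$, so the dropped equation is genuinely redundant. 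I would also check that the failure rules really signal unsolvability, using clash of top symbols (via $\tops$), the occurs-check argument for Cycle Detection, and the freshness-constraint simplification of Definition~\ref{def:simplification-freshness} for the two freshness failures.

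\medskip

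\noindent\textbf{Completeness.} For the converse I would show that every ground solution $\mu$ of $(\Gamma_0,\nabla_0)$ survives along some nondeterministic branch down to an (Output) state whose unifier covers $\mu$ in the sense of the completeness definition. The cleanest route is a solution-preservation lemma in the forward direction: if $\mu$ solves a non-terminal state and no failure rule applies, then some applicable rule has a consequent (for the one disjunctive rule (7), some choice of $\rho$) still solved by a suitably related $\mu'$. Combining this with termination from Theorem~\ref{thm:unification-terminates} (so the process always reaches an (Output) state along the chosen branch) gives completeness. The only disjunctive rule is (7), so here I would use that $\mu$ satisfies the $\tletr$-clause of $\sim$ for \emph{some} index permutation $\rho$, and pick that $\rho$; all remaining rules are deterministic and solution-preserving by the same equational reasoning as in soundness. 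For (FPS) I would check that a solution of the antecedent, restricted by the maximality of $X$ \wrt $>_{vd}$ and $X\notin\Var(\Gamma)$, yields a solution of the consequent with $X$ set to $e\rho$.

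\medskip

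\noindent The step I expect to be the main obstacle is the fixpoint machinery, namely making the (FPS)/(ElimFP) reasoning precise in both directions. The crux is the set-theoretic identity $\bigcap_{i} \Fix(\pi_i) = \Fix(\gen{\pi_1,\ldots,\pi_n})$ together with the interaction between group membership and the $\Fix$ operator under conjugation by the permutations $\rho_i$ appearing in suspensions; I would isolate this as a separate lemma about $\Fix$ and permutation groups and then feed it into both the soundness justification of (ElimFP) and the coverage claim for the output fixpoint constraints $\{``X\in\Fix(\pi)"\}$. Care is also needed to ensure that the covering substitution $\rho$ in the completeness conclusion can be chosen to simultaneously satisfy all emitted constraints $\nabla'$ and $\mathcal{X}$, which is where the fact that finite intersections of fixpoint sets are nonempty (fresh atoms always qualify) is essential.
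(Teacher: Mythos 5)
Your proposal is correct and follows essentially the same route as the paper's own (much terser) proof: rule-by-rule preservation of the solution set, with rule (7) as the only genuinely disjunctive case justified by the $\tletr$-clause of Definition~\ref{def:alpha-equivalence-sim}, the failure rules shown inapplicable to solvable states, and satisfiability of the final output obtained by instantiating the remaining variables with a fresh atom so that all fixpoint equations and atomic freshness constraints hold. The extra detail you supply on (FPS)/(ElimFP) via the identity $\bigcap_i \Fix(\pi_i) = \Fix(\gen{\pi_1,\ldots,\pi_n})$ is exactly the justification the paper leaves implicit.
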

\begin{proof}
 Soundness of the algorithm holds, by easy arguments for every rule, similar as in  \cite{urban-pitts-gabbay:03}, and since 
 the letrec-rule follows the definition of $\sim$ in Def. \ref{def:alpha-equivalence-sim}. 
A further argument is that the failure rules are sufficient to detect
 final states without solutions.
 
 Completeness requires more arguments. The decomposition and standard rules (with the exception of rule (7)), 
 retain the set of solutions. The same for (MMS), (FPS),  and \mbox{(ElimFP)}. 
 The nondeterministic Rule (7)  provides all possibilities for potential ground solutions.   
 Moreover, the failure rules are not applicable to states that are solvable.
 
 A final output of  \letrecunify\ has at least one ground solution as instance:   
 we can instantiate all variables that remain in 
 $\Gamma_{\mathit{out}}$ by a fresh atom. Then all fixpoint equations are satisfied, since the permutations cannot change this atom, 
  and since the (atomic) freshness constraints hold. This ground solution can be represented in polynomial space by using $\theta$, 
  plus an instance $X \mapsto a$ for all remaining  variables $X$ and a fresh atom $a$, and removing all fixpoint equations and 
  freshness constraints. \qed
\end{proof}
 
 \begin{theorem}\label{thm:letrec-unification-in-NP}
 The nominal letrec-unification problem is in $\NP$.
 \end{theorem}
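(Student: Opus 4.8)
The plan is to read NP membership off directly from the decision variant of \letrecunify\ together with the soundness, completeness, and complexity results already established. I would regard the decision variant as a nondeterministic Turing machine: on input $(\Gamma_0,\nabla_0)$ it first flattens (polynomial time) and then repeatedly applies the rules, treating the single don't-know nondeterministic rule (7) as the source of the machine's nondeterministic choices and applying every other rule deterministically in any convenient order. A computation branch \emph{accepts} iff it reaches the (Output) rule without any failure rule from Definition \ref{def:failure-rules} having fired.

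First I would bound the nondeterminism. The only don't-know guesses are the permutations $\rho$ on $\{1,\ldots,n\}$ chosen in rule (7); each such choice can be encoded in $O(n\log n)$ bits. By Theorem \ref{thm:unification-terminates} the termination measure strictly decreases with every rule application and all intermediate states stay polynomial in $S = \mathit{size}(\Gamma_0,\nabla_0)$, so any branch performs only polynomially many rule applications in total; in particular rule (7) is applied polynomially often. Hence the total number of guessed bits along any branch is polynomial, and---again by Theorem \ref{thm:unification-terminates}---each single step, including the group-membership test of (ElimFP) and the freshness checks, is computable in polynomial time. Consequently each branch runs in deterministic polynomial time, and the whole procedure is a nondeterministic polynomial-time algorithm.

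It then remains to argue that an accepting branch exists iff $(\Gamma_0,\nabla_0)$ is solvable. For the ``if'' direction I use completeness (Theorem \ref{thm:unification-sound-and-complete}): since rule (7) offers all permutations of the bindings and the failure rules never apply to a solvable state, a solvable problem admits a branch that avoids every failure rule and terminates via (Output). For the ``only if'' direction I use soundness together with the observation from the proof of Theorem \ref{thm:unification-sound-and-complete} that any final output has at least one ground solution (instantiate the remaining variables by a fresh atom, which satisfies all fixpoint equations and the atomic freshness constraints); thus reaching (Output) witnesses solvability. Combining the two directions, the decision problem is accepted by a nondeterministic polynomial-time procedure, which is exactly NP membership.

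The work here is essentially bookkeeping, so the one point genuinely requiring care is that the nondeterminism stays polynomially bounded: although rule (7) has exponentially many alternatives, an NP machine guesses only one of them, and the bound on how often (7) can fire keeps the guessed-bit budget polynomial. I would also make explicit that the remaining, don't-care nondeterministic rules need not be branched upon---choosing any applicable one preserves the solution set, as already used for completeness---so they contribute no nondeterminism to the NP machine. Everything else is an immediate consequence of the three preceding theorems.
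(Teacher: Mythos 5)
Your proposal is correct and follows exactly the paper's route: the paper proves this theorem in one line by citing Theorems \ref{thm:unification-terminates} and \ref{thm:unification-sound-and-complete}, and your argument is simply the careful unpacking of why those two results (polynomial bound on the nondeterministic computation, plus soundness and completeness giving the equivalence between acceptance and solvability) yield NP membership. The extra attention you give to bounding the guessed bits from rule (7) is exactly the right detail to make explicit, but it is the same proof.
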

 \begin{proof}
 This follows from Theorems \ref{thm:unification-terminates}  and \ref{thm:unification-sound-and-complete}.
 \end{proof}

  \section{Nominal Matching with Letrec: \letrecmatch}
 
 Reductions in higher order calculi with letrec, in particular on a meta-notation, 
 require a matching algorithm, matching its left hand side  to an expression. 
 
 \begin{example}
 Consider the (lbeta)-rule, which is the version of (beta) used in call-by-need  calculi with sharing
  \cite{ariola:95,moran-sands-carlsson:99,schmidt-schauss-schuetz-sabel:08}.
  \begin{alignat*}{1}
 (\mathit{lbeta}) \quad & (\lambda x.e_1)~e_2 \to ~\tletr~x. e_2~\tin~e_1.
  \end{alignat*}
 An (lbeta) step, for example, on $(\lambda x.x)~(\lambda y.y)$ is performed by switching to the language $LRL$ and then matching
 $(\mathit{app}~(\lambda a.X_1)~X_2) \matcheq  \mathit{app}~ (\lambda a.a)~(\lambda b.b)$, where $\mathit{app}$
 is the explicit representation of the binary application operator. This  results in $\sigma := \{X_1 \mapsto a; X_2 \mapsto (\lambda b.b)\}$, 
 and the reduction result is the $\sigma$-instance of   $(\tletr~a. X_2~\tin~X_1)$, which is  $(\tletr~a. (\lambda b.b)~\tin~a)$.
 Note that only the sharing power of the recursive environment is used here.
 \end{example}
 
 We derive a nominal matching algorithm as a specialization of \letrecunify. 
 We use nonsymmetric  equations written $s \matcheq t$, where $s$ is an $\LRLX$-expression, and $t$ does not contain variables.
 Note that neither freshness constraints nor suspensions are necessary (and hence no fixpoint equations). 
 We assume that the input is a set of equations of (plain) expressions. 
 
  The rules of the algorithm \letrecmatch\  are: \\
  
$\gentzen{\Gamma \dotcup\{e \matcheq e\}}{\Gamma}$ \qquad
$\gentzen{\Gamma\dotcup\{(f~s_1 \ldots s_n)\matcheq (f~s_1' \ldots s_n')\}}
   {\Gamma \dotcup \{s_1 \matcheq s_1',\ldots,s_n \matcheq s_n'\}}$ \qquad
$\gentzen{\Gamma\dotcup\{\lambda a.s \matcheq \lambda a.t\}}
      {\Gamma \dotcup \{s\matcheq t\}}$ 
     \\[2mm] 
     
     $\gentzent{\Gamma\dotcup \{\lambda a.s\matcheq \lambda b.t\}}
      {\Gamma \dotcup \{s \matcheq (a~b){\cdot}t\}}$  
      if  $a \# t$,  otherwise Fail. 
       \\[2mm]

$\gentzen{\Gamma \dotcup\{\tletr ~a_1.s_1;\ldots,a_n.s_n~\tin~r \matcheq \tletr ~b_1.t_1;\ldots,b_n.t_n~\tin~r' \}}
         {\mathop{|}\limits_{\forall \rho}~\Gamma \dotcup\{\lambda a_1.\ldots \lambda a_n.(s_1,\ldots,s_n,r) \matcheq
          \lambda a_{\rho(1)}.\ldots \lambda a_{\rho(n)}.(t_{\rho(1)},\ldots,t_{\rho(n)},r')\}}$\\[2mm]
       where $\rho$ is a (mathematical) permutation on $\{1,\ldots,n\}$\\

$\gentzen{\Gamma\dotcup \{X \matcheq e_1, X \matcheq e_2\}}
   {\Gamma \dotcup  \{X \matcheq e_1\}}$
     if $e_1 \sim e_2$, otherwise Fail
   \\[2mm]
   
 The test  $e_1 \sim e_2$ will be performed by the (nondeterministic) matching rules.

\begin{description}
\item[Clash Failure:] if $s \doteq t \in \Gamma$, $\tops(s) \in \FX, \tops(t) \in \FX$, but
 $\tops(s) \not= \tops(t)$.
\end{description}
\noindent

  \begin{theorem}\label{thm:matching-in-NP}
  \letrecmatch\  is sound and complete for nominal letrec matching.
  It decides nominal letrec matching in 
     nondeterministic polynomial time.  
  Its collecting  version  returns a finite complete set of an at most exponential number of matching substitutions, 
  which are of at most polynomial size.   
  \end{theorem}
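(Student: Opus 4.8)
The plan is to derive all three claims by specializing the arguments already established for \letrecunify\ and exploiting the simplifications that matching affords. First I would explain why the structure degenerates. Since the right-hand side $t$ of every equation $s \matcheq t$ is variable-free, and the swapping in the different-binder abstraction rule is always pushed onto the right-hand side (producing $(a~b){\cdot}t$, again a ground expression), no permutation is ever applied to a variable. Consequently no suspensions, no freshness constraints, and no fixpoint equations arise, so the problematic rules (FPS), (ElimFP) and (Output) of \letrecunify\ have no counterpart here. The side condition $a\#t$ of the abstraction rule is a concrete, polynomial-time decidable test on a ground expression rather than a constraint to be accumulated, and the final answer is a plain substitution sending each $X$ to a ground subexpression of the right-hand side, which is why matching outputs substitutions only.

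For soundness I would verify each rule individually, exactly as in the proof of Theorem \ref{thm:unification-sound-and-complete}: decomposition of function applications and of equal-binder abstractions, the different-binder abstraction rule (justified by the fourth clause of Definition \ref{def:alpha-equivalence-sim} together with the $a\#t$ test), and the letrec rule (justified by the last clause of Definition \ref{def:alpha-equivalence-sim}, whose existential over permutations $\rho$ is realized by the disjunctive branching). The merge rule, replacing $X \matcheq e_1, X \matcheq e_2$ by $X \matcheq e_1$, is sound precisely because of its side condition $e_1 \sim e_2$; since $e_1$ and $e_2$ are ground, this $\alpha$-equivalence test is itself decided by running the nondeterministic matching rules. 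Completeness then follows because every rule except the letrec rule is solution-preserving and deterministic, whereas the letrec rule's disjunction enumerates exactly the binding permutations permitted by $\sim$; hence any matcher of the premise is a matcher of some branch of the conclusion.

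For the complexity bound I would reuse a cut-down version of the termination measure of Theorem \ref{thm:unification-terminates}, namely $(\numberLetrec,\numberSize,\numberVar,\numberOccVar,\numberEqsNonX,\numberEquations)$ ordered lexicographically. Each matching rule strictly decreases it: decomposition of applications and of abstractions reduces $\numberSize$ while leaving $\numberLetrec$ unchanged; the letrec rule turns the two top letrecs into nested abstractions, strictly decreasing $\numberLetrec$ (the dominant component) even though it may raise $\numberSize$; and both the trivial rule and the merge rule decrease $\numberEquations$. Because there is no (FPS), nothing ever increases the number of variables or equations, so no blow-up occurs along a branch, every intermediate state stays polynomial in $S = \mathit{size}(\Gamma_0)$, and each single rule application, including the ground $\alpha$-equivalence test, runs in nondeterministic polynomial time. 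This gives the decision procedure in NP.

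Finally, for the collecting version I would bound the number of branches. The only branching is the guess of $\rho$ in the letrec rule; since each such application strictly decreases $\numberLetrec$, at most $S$ of them occur along any path, each over at most $S!$ permutations, yielding at most $(S!)^{S}$, hence exponentially many, branches and therefore at most exponentially many substitutions. Each substitution maps variables to ground subexpressions of the right-hand sides and so has polynomial size. The step I expect to require the most care is the embedded equivalence test $e_1 \sim e_2$ in the merge rule: I must argue that reusing the nondeterministic matching rules for it neither compromises completeness (all permutations of letrec bindings must be tried) nor pushes the algorithm outside NP, and that this side computation terminates under, and does not interfere with, the termination measure of the surrounding derivation.
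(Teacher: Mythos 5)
Your proposal is correct and follows exactly the route the paper intends: its proof of this theorem is literally ``This follows by standard arguments,'' meaning the specialization of the soundness/completeness and termination-measure arguments of Theorems \ref{thm:unification-terminates} and \ref{thm:unification-sound-and-complete} to the ground-right-hand-side case, which is precisely what you have written out (including the correct observation that suspensions, freshness constraints and fixpoint equations disappear, and that the embedded $e_1 \sim e_2$ test is discharged by the nondeterministic matching rules themselves). No gaps; your elaboration is more detailed than the paper's own one-line proof.
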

  \begin{proof}
  This follows by standard arguments.  
  \end{proof}
  
   \begin{theorem}
   Nominal letrec matching is NP-complete. 
  \end{theorem}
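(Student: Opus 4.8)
The plan is to prove the two halves of NP-completeness separately. Membership in NP is already in hand: by Theorem \ref{thm:matching-in-NP} the algorithm \letrecmatch\ decides nominal letrec matching in nondeterministic polynomial time, so the matching decision problem lies in NP. It therefore remains to establish NP-hardness by a polynomial-time many-one reduction from a known NP-complete problem, for which I would choose 3-SAT.

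The source of nondeterministic power in \letrecmatch\ is its letrec rule: matching two letrec-environments of size $n$ forces a genuine guess of a permutation $\rho \in S_n$ of the bindings, and distinct guesses are not interchangeable. The reduction exploits exactly this. For each Boolean variable $v_i$ of the 3-SAT instance I would build a small \emph{choice gadget}, namely a pair of two-binding letrec-environments arranged so that the only two solvable alignments correspond to $\rho_i$ being the identity or the transposition; these two cases encode $v_i = \mathit{false}$ and $v_i = \mathit{true}$. Because the gadgets sit in disjoint environments, their permutation choices are mutually independent, yielding all $2^n$ truth assignments. The chosen alignment is made observable by letting it act on a distinguished pair of atoms, so that the truth value of $v_i$ is transmitted through the bound-atom renaming forced by the abstraction matching rules, in particular the rule that introduces the swap $(a~b)$ when the two binders differ.

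For each clause I would build a \emph{clause gadget}, again a letrec-environment, whose three bindings correspond to the clause's three literals. Here the permutation freedom of the letrec rule is used as a disjunction: the gadget is solvable precisely when the guessed alignment selects some literal whose bound-atom renaming, inherited from the relevant choice gadget through a shared pattern variable, matches a "satisfied" marker present on the ground side. Thus the whole matching problem is solvable iff some truth assignment satisfies every clause. The construction is evidently computable in polynomial time, and both sides are legitimate inputs: the pattern is an ordinary variable-containing \LRLX-expression and the ground term is variable-free.

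The main obstacle is the coupling step: I must arrange the shared pattern variables and the distinguished atoms so that the truth value chosen in each variable gadget is propagated faithfully and unambiguously to every clause gadget in which that literal occurs, while keeping the per-variable choices independent. The delicate point is that in letrec-matching the only channel for information between environments is the renaming of free atoms induced by the guessed permutations together with the forced swaps of the abstraction rules; I would verify that these renamings compose consistently, so that a global solution of the matching problem exists exactly when a satisfying assignment does, and conversely that every satisfying assignment yields a solvable alignment in each gadget.
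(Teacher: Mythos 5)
Your membership-in-NP half is exactly the paper's: it cites Theorem~\ref{thm:matching-in-NP}. For hardness, however, the paper does not reduce from 3-SAT; it invokes the (independent) Theorem~\ref{thm:matching-NP-hard}, which encodes the Hamiltonian cycle problem for regular graphs: the ground target is a single letrec whose environment lists the nodes and edges, the pattern is a letrec whose edge bindings $b_1.f\,X_1\,X_2;\ldots;b_n.f\,X_n\,X_1$ spell out a cycle through shared unification variables $X_1,\ldots,X_n$, padded with dummy bindings, and the permutation guessed by the letrec rule is forced to realize the cycle. A 3-SAT reduction is not wrong in principle, but as described yours has two concrete problems. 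First, you claim the truth value chosen in a variable gadget is ``transmitted through the bound-atom renaming forced by the abstraction matching rules.'' That channel does not exist between disjoint letrec environments: the binding atoms of a letrec are bound locally, so the swaps $(a~b)$ introduced when aligning one gadget's binders never reach another gadget. The only cross-gadget channel is the instantiation of shared pattern variables to ground terms (this is precisely what the paper's reduction uses), so your variable gadget must force, say, $X_v \mapsto T$ or $X_v \mapsto F$ for distinct ground constants, not a renaming.

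Second, the clause gadget as stated would fail. With a three-binding pattern environment carrying the three literal terms matched against a three-binding ground environment, the guessed $\rho$ is a bijection, so \emph{all three} pattern bindings must match \emph{some} fixed ground binding; the effect is to constrain the multiset of the three literal values to equal a fixed multiset on the target side, which is not the disjunction ``at least one literal is true'' (e.g.\ a target multiset $\{T,T,F\}$ rejects the satisfying assignments $\{T,T,T\}$ and $\{T,F,F\}$). To repair this you need absorbing fresh variables in the non-witness pattern positions together with an enumeration of the satisfying value-tuples on the ground side (seven bindings per clause), or a similar device, and you must then check that the freshness side-conditions of the $\lambda$- and letrec-rules do not block the intended alignments. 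None of this is carried out, and you yourself flag the coupling step as unverified; as it stands the hardness half is a plan with a broken transmission mechanism and a broken clause gadget, not a proof. The paper's Hamiltonian-cycle encoding avoids both issues because consistency is enforced purely through the shared variables $X_i$ occurring in several bindings of one and the same environment.
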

  \begin{proof}
  The problem is in NP, which follows from Theorem \ref{thm:matching-in-NP}. It is also NP-hard, 
  which follows from the (independent) Theorem \ref{thm:matching-NP-hard}. 
  \end{proof}
  
   A slightly more general situation for matching occurs, when the matching equations $\Gamma_0$ are compressed using a dag.
   We construct a practically more efficient algorithm \letrecdagmatch\ from \letrecunify\ as follows. First we generate $\Gamma_1$ from $\Gamma_0$, 
   which only contains (plain) flattened  expressions by 
   encoding the dag-nodes as variables together with an equation. An expression is said $\Gamma_0$-ground, if it does not reference variables from $\Gamma_0$ (also via equations).
    In order to avoid suspension (i.e. to have nicer results), the decomposition rule for $\lambda$-expressions with different binder names 
    is modified as follows :

   \begin{flushleft}
   $\gentzen{\Gamma\dotcup(\lambda a.s \doteq \lambda b.t\},\nabla }
      {\Gamma \dotcup \{s \doteq (a~b){\cdot}t\}, \nabla \cup  \{a {\freshdot} t\}}$ 
       ~~ $\lambda b.t$  is $\Gamma_0$-ground  
     \\[2mm]  
    \end{flushleft}
 
   The extra conditions $a {\freshdot} t$ and $\Gamma_0$-ground  can be tested in polynomial time.
    The equations $\Gamma_1$ are  processed applying  \letrecunify\ (with the mentioned modification) 
    with the  guidance  that the right-hand sides of match-equations are also right-hand sides of equations in the decomposition rules.
   The resulting matching substitutions can be interpreted as the instantiations into the variables of $\Gamma_0$. 
   Since $\Gamma_0$ is a matching problem, the result will be free of  fixpoint equations, and there will be no freshness constraints in the
   solution.
  Thus we have: 
  \begin{theorem}\label{thm:nom-dag-matching}
  The collecting variant of \letrecdagmatch\ outputs an at most exponential set of dag-compressed substitutions that is complete and where every unifier is
   represented in polynomial space.  
  \end{theorem}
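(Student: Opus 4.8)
The plan is to derive all three claims---completeness, the at-most-exponential cardinality, and the polynomial-space (dag) representation of each output---by treating \letrecdagmatch\ as a guided, specialized run of \letrecunify\ on the flattened dag-encoding $\Gamma_1$ of $\Gamma_0$, and then invoking Theorems \ref{thm:unification-sound-and-complete} and \ref{thm:unification-terminates}. First I would check that encoding each dag node as a fresh variable with a defining equation is a solution-preserving translation of polynomial size: ground solutions of $\Gamma_0$ correspond bijectively to solutions of $\Gamma_1$. It then suffices to verify that the two deviations from the generic algorithm---the guidance keeping match right-hand sides on the right, and the suspension-avoiding $\lambda$-rule---do not change the solution set.

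The guidance only fixes the orientation of equations and discards no solutions, so completeness is inherited from Theorem \ref{thm:unification-sound-and-complete}. For the modified $\lambda$-rule I would argue that, because $\Gamma_0$ is a matching problem, every right-hand side is $\Gamma_0$-ground throughout the run. Hence when rule (6) fires on $\lambda a.s \doteq \lambda b.t$, the side condition $a\freshdot t$ is a closed, decidable predicate on the ground term $t$ rather than a constraint on an unknown; testing it on the spot (failing the branch otherwise) represents exactly the same solutions as carrying it along, and it is the sole potential source of freshness constraints. Likewise, pushing the swapping $(a~b)$ into the ground $t$ means no suspension $\pi\cdot X$ is ever created on a right-hand side, so no fixpoint equation $X \doteq \pi\cdot X$ is produced. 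Consequently the (FPS), (ElimFP) and fixpoint-output machinery is vacuous here, and the final state yields a plain substitution $\theta$ with empty fixpoint-constraint set and empty $\nabla$, exactly as the statement asserts.

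For the quantitative bounds I would reuse the termination measure of Theorem \ref{thm:unification-terminates}: each nondeterministic branch is a \letrecunify\ run of polynomially many steps, so the accumulated dag-compressed $\theta$ has polynomial size. The only disjunctive rule is the letrec rule (7), branching over the $n!$ index-permutations $\rho$ of an environment with $n$ bindings; summed over the polynomially many letrec equations this produces at most exponentially many leaves, which is precisely the claimed at-most-exponential complete set.

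I expect the delicate point to be the polynomial-space dag claim under the eager, suspension-avoiding $\lambda$-rule. Whereas rule (6) would defer a renaming as a suspension, the modified rule applies $(a~b)$ to the ground dag $t$ immediately, and one must rule out an unfolding or duplication blow-up. The hard part will be to show that each such swapping can be realised by a bounded number of fresh dag nodes that share all of $t$'s substructure (equivalently, by annotating the affected node with the atom renaming), so that the total number of new nodes is bounded by the number of decomposition steps---polynomial by the measure of Theorem \ref{thm:unification-terminates}. Once that is established, every substitution in the collecting output is dag-compressed in polynomial space, and the theorem follows.
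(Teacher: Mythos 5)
Your proposal is correct and takes essentially the route the paper intends: the paper offers no explicit proof of this theorem, presenting it as an immediate consequence of the construction of \letrecdagmatch\ as a guided specialization of \letrecunify\ on the flattened dag-encoding, together with Theorems~\ref{thm:unification-terminates} and~\ref{thm:unification-sound-and-complete} and the preceding observations that no suspensions, fixpoint equations, or residual freshness constraints arise. Your extra care about eagerly applying swappings to shared dag nodes without a duplication blow-up addresses a point the paper leaves implicit, but it does not change the overall argument.
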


  \section{Hardness of Nominal Letrec Matching and Unification}

  \begin{theorem}\label{thm:matching-NP-hard}
  Nominal letrec matching (hence also unification)  is NP-hard, 
 for two letrec expressions, where subexpressions are free of letrec.
  \end{theorem}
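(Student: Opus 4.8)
The plan is to prove NP-hardness by a polynomial-time reduction from a standard NP-complete problem, say \textsc{3-Sat}, producing a single \letrecmatch-instance between two \tletr-expressions whose binding bodies are letrec-free. Since the right-hand side will be ground, this is genuinely a matching instance, so the \emph{hence also unification} clause follows immediately (matching is the special case of unification in which one side carries no variables). The only source of combinatorial difficulty in such an instance is the interplay between the two degrees of freedom exposed by the \tletr-rule: the guessed permutation $\rho$ on the $n$ bindings (which aligns left bindings, and their binders, with right bindings) and the substitution $\sigma$ for the unification variables occurring on the left. The reduction must arrange these two so that a valid pair $(\rho,\sigma)$ exists exactly when the input formula $\phi = C_1 \wedge \cdots \wedge C_k$ over $p_1,\ldots,p_m$ is satisfiable.

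Concretely, I would build the right-hand (ground) environment out of two kinds of gadgets. For each propositional variable $p_i$ I include a pair of ground bindings encoding the tokens $\mathsf{T}$ and $\mathsf{F}$, shaped with a dedicated function symbol so that, by \textbf{Clash Failure}, they can only be aligned with the corresponding left gadget. The left environment contains, for each $p_i$, a variable gadget holding the unification variable $X_i$; the permutation $\rho$ then decides whether $X_i$'s gadget is aligned with the $\mathsf{T}$- or the $\mathsf{F}$-binding, which is precisely the act of choosing a truth value. Because each $X_i$ also occurs in the clause gadgets, the consistency rule $\Gamma \dotcup\{X \matcheq e_1, X \matcheq e_2\} \to \Gamma \dotcup \{X \matcheq e_1\}$ (applicable only when $e_1 \sim e_2$) forces every occurrence of $X_i$ to receive the same value, implementing global consistency of the assignment. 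For each clause $C_j = \ell_1 \vee \ell_2 \vee \ell_3$ I add a clause block whose successful matching requires $\rho$ to select one of three literal-bindings as the clause's witness; the selected binding matches only if the shared variable for that literal is set to the satisfying value, so the nondeterministic permutation guess supplies exactly the disjunction, while the shared-variable consistency couples the witness back to the global assignment. Distinct function symbols and atoms per gadget and per index keep the blocks separated, so that any surviving $\rho$ must be block-respecting.

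I would then prove correctness in both directions: from a satisfying assignment, read off $\rho$ (aligning each $X_i$ with its value and each clause with a true literal) and check that every decomposition step and every consistency side-condition holds; conversely, from any successful match extract the forced values of the $X_i$ from $\rho$, argue they form a consistent assignment, and argue that each matched clause block certifies a true literal, so $\phi$ holds. The size of the construction is linear in $|\phi|$, giving a polynomial reduction, and combined with Theorem~\ref{thm:matching-in-NP} this in fact yields NP-completeness. The main obstacle is the clause gadget together with the $\Leftarrow$ direction: I must design the literal-bindings so that disjunction is faithfully captured by the permutation choice, and, crucially, rule out \emph{spurious} matchings in which $\rho$ mixes blocks or in which an unsatisfiable formula is accepted through an unintended alignment. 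Guaranteeing that \textbf{Clash Failure} and the binder-renaming discipline of the $\lambda$-rules forbid every cross-block permutation, so that the forced values are well defined and genuinely consistent, is the delicate part of the argument.
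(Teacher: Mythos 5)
The paper proves this theorem by a different, and fully concrete, reduction: from the Hamiltonian cycle problem in regular graphs. The ground expression encodes the graph as a letrec whose binders are the vertices (bindings $a_i.(node~a_i)$ plus one binding $b.(f~a~a')$ per edge), and the pattern encodes a cycle $b_1.f~X_1~X_2;\ldots;b_k.f~X_n~X_1$ padded with dummy bindings $b.f~Z_2~Z_3$, where the number of dummies is computable because the graph is regular. Your plan shares the essential insight --- the only source of hardness is the nondeterministic permutation $\rho$ in the letrec decomposition rule, coupled with consistency of repeated variables --- and a 3-SAT reduction along your lines is plausible. But as written it has a genuine gap, and it is exactly the part you yourself defer: the clause gadget and the $\Leftarrow$ direction.

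Concretely, two issues must be resolved before your argument goes through. First, $\rho$ is a bijection on bindings and the decomposition rule only applies when both environments have the same length, so you cannot merely ``select one of three literal-bindings as the clause's witness'': the other two left bindings of the clause block must also be aligned with something on the ground side and must match it. This forces you to introduce padding bindings (with fresh absorbing variables) in every gadget and to balance the counts exactly --- the same difficulty the paper's proof confronts and solves via the dummy edge-bindings and the regularity assumption. A workable fix is, e.g., one left binding $(h_j~V(\ell_1)~V(\ell_2)~V(\ell_3))$ per clause against the seven ground bindings listing the satisfying valuations, plus six absorbing left dummies; but some such explicit choice has to be made and its count verified. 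Second, matching cannot compute negation, so a negative literal $\neg p_i$ cannot be represented by any term built from $X_i$; you need a complementation gadget introducing a second variable $\bar X_i$ with a pair-binding matched against the two ground bindings $(g_i~\mathsf{T}~\mathsf{F})$ and $(g_i~\mathsf{F}~\mathsf{T})$, again with the permutation doing the choosing. Until these gadgets are pinned down, the claim that no spurious cross-block alignment survives (your $\Leftarrow$ direction) cannot be checked, and that is the entire content of the hardness proof. One further caution: if your $\mathsf{T}/\mathsf{F}$ tokens are atoms, keep them disjoint from all binder atoms, since the $\lambda$-decomposition applies swappings of binders to the ground bodies; using nullary function symbols avoids this entirely.
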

 \begin{proof}
 We encode the NP-hard problem of finding a Hamiltonian cycle in regular graph
 \cite{picouleau:94,garey-johnson-tarjan:76}:
 Let $a_1,\ldots, a_n$ be the vertexes of the graph, and $E$ be the set of edges.
 The first environment part is $\env_1 = a_1.(node~a_1);\ldots;a_n.(node~a_n)$, and a second environment part 
 $\env_2$ consists of bindings $b. (f~a ~a')$ for every edge $(a,a') \in E$ for fresh names $b$. 
 Then let $s := $
 $(\tletr~\env_1;\env_2~\tin~0)$ representing the graph. \quad
 Let the second expression encode the question whether there is a Hamiltonian cycle in a regular graph as follows.
 The first part of the environment is $\env_1' = a_1.(node~X_1), \ldots, a_n.(node~X_n)$. The second part is 
 $\env_2'$ consisting of $b_1.f~X_1~X_2; b_2.f~X_2~X_3; \ldots b_k.f~X_n~X_1$, and the third part consisting 
 of a  number of (dummy) entries of the form  $b.f~Z_2~Z_3$, where $b$ is always a fresh atom  for every binding, and
  $Z_2, Z_3$  are fresh variables. The number of these dummy entries
 can be easily computed from the number of nodes and the degree of the graph, and it is less than the size of the graph. 
 
 Then the  matching problem is solvable iff the graph has a Hamiltonian cycle.
 \end{proof} 
 
 \begin{theorem}\label{thm:letrec-iunification-NP-complete}
  The nominal letrec-unification problem is NP-complete.
  \end{theorem}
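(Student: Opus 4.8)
The plan is to assemble this result directly from theorems already established in the paper, so the proof is essentially a one-line combination followed by a verification that both directions are genuinely available. First I would observe that membership in $\NP$ is already settled: Theorem~\ref{thm:letrec-unification-in-NP} states that the nominal letrec-unification problem is in $\NP$, and that theorem in turn rests on Theorem~\ref{thm:unification-terminates} (the decision variant of \letrecunify\ runs in nondeterministic polynomial time) together with Theorem~\ref{thm:unification-sound-and-complete} (soundness and completeness of \letrecunify). So the only content to supply for the upper bound is the citation.

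For the lower bound I would invoke NP-hardness. The paper has proved in Theorem~\ref{thm:matching-NP-hard} that nominal letrec \emph{matching} is NP-hard, via the encoding of the Hamiltonian-cycle problem in regular graphs. The key step is to note that matching is a special case of unification: a matching equation $s \matcheq t$ with $t$ variable-free is solvable exactly when the unification problem $s \doteq t$ is solvable, since any solution $\rho$ of the unification instance must fix the ground right-hand side and hence acts as a matcher on the left. Thus the hardness of matching transfers immediately to unification, and indeed Theorem~\ref{thm:matching-NP-hard} already parenthetically asserts ``hence also unification''. Combining the $\NP$ membership with NP-hardness yields NP-completeness.

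I do not expect a genuine obstacle here, since both halves are quotations of prior results; the only point requiring a sentence of care is making the matching-to-unification reduction explicit, namely confirming that the NP-hard matching instances produced in the proof of Theorem~\ref{thm:matching-NP-hard} are legitimate unification instances when read with $\doteq$ in place of $\matcheq$, which they are because their right-hand sides are ground. Hence the proof reduces to: the problem is in $\NP$ by Theorem~\ref{thm:letrec-unification-in-NP}, and it is NP-hard by Theorem~\ref{thm:matching-NP-hard}, so it is NP-complete.
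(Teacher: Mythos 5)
Your proof is correct and takes exactly the same route as the paper: NP membership is cited from Theorem~\ref{thm:letrec-unification-in-NP} and NP-hardness from Theorem~\ref{thm:matching-NP-hard}, whose statement already covers unification since its matching instances have ground right-hand sides. Your extra sentence making the matching-to-unification reduction explicit is a harmless elaboration of what the paper leaves implicit in the phrase ``hence also unification''.
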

  \begin{proof}
  This follows from Theorems \ref{thm:letrec-unification-in-NP}    and \ref{thm:matching-NP-hard}.
  \end{proof}
  
 We say that an expression $t$ {\em contains garbage}, iff there is a subexpression $(\tletr~\env~\tin ~r)$ ,
 and the environment $\env$ can be split into two environments
 $\env = \env_1;\env_2$, such that $\env_1$ is not trivial, and the atoms from    
 $\LA(\env_1)$ do not occur in $\env_2$ nor in $r$.
 Otherwise, the expression is {\em free of garbage}.  
Since  $\alpha$-equivalence of $\LRL$-expressions is Graph-Isomorphism-complete \cite{schmidtschauss-sabel-rau-RTA:2013}, but  
  $\alpha$-equivalence of garbage-free $\LRL$-expressions is polynomial, it is useful to look for improvements of unification and matching 
for garbage-free expressions.  
As a remark: Graph-Isomorphism is known to have complexity between $\mathit{PTIME}$ and $\NP$; there are arguments that it is weaker than the class of NP-complete problems
   \cite{schoening:88}. There is also a claim that it is quasi-polynomial \cite{Babai:2016}, which means that it  requires less than exponential time.

 \begin{theorem}\label{thm:matching-GI-hard}
  Nominal letrec matching with one occurrence of a single variable and a garbage-free target expression is Graph-Isomorphism-hard.
  \end{theorem}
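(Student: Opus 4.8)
The plan is to reduce (directed) Graph Isomorphism, which is polynomial-time equivalent to Graph Isomorphism, to the restricted matching problem, exploiting the fact that the permutation $\rho$ in the letrec rule of \letrecmatch\ realizes exactly a renaming-correspondence between the two environments. Given two directed graphs $G_1=(V,E_1)$ and $G_2=(V,E_2)$ with $V=\{1,\ldots,n\}$ and $|E_1|=|E_2|=m$ (if the edge counts differ the graphs are trivially non-isomorphic and the letrec rule fails on the length mismatch), I would encode vertices as letrec-bound atoms and directed edges as additional bindings. For a graph $G=(V,E)$ write $\env_G$ for the environment $a_1.\mathit{node};\ldots;a_n.\mathit{node}; c_1.(f~u_1~v_1);\ldots;c_m.(f~u_m~v_m)$, where the $a_i$ are vertex atoms, each edge $(u_j,v_j)\in E$ contributes a binding $c_j.(f~u_j~v_j)$, and $\mathit{node},f$ are fixed symbols. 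The matching instance I would output is
\[
 s := (\tletr~\env_{G_2}~\tin~X) \quad\matcheq\quad (\tletr~\env_{G_1}~\tin~(a_1,\ldots,a_n,c_1,\ldots,c_m)) =: t,
\]
so that $s$ contains the single variable $X$ with exactly one occurrence and $t$ is ground.

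First I would verify that $t$ is garbage-free: every bound atom of $\env_{G_1}$, both the vertex atoms $a_i$ and the edge atoms $c_j$, occurs in the body $(a_1,\ldots,a_n,c_1,\ldots,c_m)$, so for any split $\env=\env_1;\env_2$ with $\env_1$ nontrivial the atoms of $\LA(\env_1)$ occur in the body, contradicting the definition of garbage. Hence $t$ is free of garbage as required. (The pattern $s$ may contain garbage, but the theorem constrains only the target.)

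Next I would establish correctness by analysing the only disjunctive rule, the letrec rule of \letrecmatch. Applying it lambda-lifts both environments and, for a chosen permutation $\rho$ of the $n+m$ bindings, produces a component-wise correspondence together with the body equation $X \matcheq r''$, where $r''$ is the suitably renamed target body. Decomposition forces any successful $\rho$ to map vertex bindings to vertex bindings and edge bindings to edge bindings (otherwise $\mathit{node}$ clashes with $f$ via Clash Failure), and hence induces a bijection $\varphi\colon V\to V$. Matching the edge bindings then requires, for each $(u,v)\in E_2$, that $(\varphi(u),\varphi(v))\in E_1$, and since $\rho$ is a bijection this holds iff $\varphi$ is an isomorphism $G_2\cong G_1$. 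The body equation $X\matcheq r''$ is always solvable and imposes no constraint on $\rho$. Therefore $s\matcheq t$ is solvable iff $G_1\cong G_2$, and the reduction is clearly polynomial.

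The main obstacle is reconciling garbage-freeness of $t$ with the need for $\rho$ to range over all vertex correspondences. Listing every atom in the target body is what makes $t$ garbage-free, but if the pattern body likewise listed the atoms in a fixed order the body equation would force $\rho$ to be the identity, collapsing isomorphism to syntactic identity; and a purely variable-free matching cannot be hard here, since $\alpha$-equivalence of garbage-free expressions is polynomial. The single free variable $X$ in the pattern body is precisely what resolves this tension: it absorbs the target body, leaving $\rho$ unconstrained, so the reduction captures full isomorphism while keeping $t$ garbage-free. This is why one variable occurrence already suffices for Graph-Isomorphism-hardness, even though plain garbage-free equivalence is polynomial.
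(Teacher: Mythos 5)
Your reduction is essentially the paper's own proof: both encode the two graphs as letrec environments consisting of node-bindings and edge-bindings, place the single occurrence of $X$ as the body of the pattern letrec, and list the bound atoms in the body of the ground target so that the permutation $\rho$ in the letrec decomposition rule realizes exactly a vertex bijection, with the node/edge clash forcing it to respect the two kinds of bindings. The only difference is minor---you also list the edge atoms $c_1,\ldots,c_m$ in the target body, whereas the paper's target body is $f~a_1\ldots a_n$; your variant is if anything the safer choice, since under the paper's literal definition of garbage the fresh edge-binding atoms must occur somewhere in the body or the remaining environment for the target to be garbage-free.
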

  \begin{proof}
  Let $G_1, G_2$ be two graphs. Let $s$ be $(\tletr~\env_1~\tin~ f~ a_1 \ldots ~a_n)$ the encoding of  a graph $G_1$ where $\env_1$ is the encoding
  as in the proof of Theorem \ref{thm:matching-NP-hard} and the nodes are 
  encoded as $a_1 \ldots a_n$. Then the expression $s$ is free of garbage. 
  Let the environment $\env_2$  be the encoding of $G_2$ in the expression $t = \tletr~ \env_2~\tin~X$. 
  Then $t$ matches $s$ iff the graphs are isomorphic. Hence we have $GI$-hardness. 
  If there is an isomorphism of $G_1$ and $G_2$, then it is easy to see that this bijection leads to an equivalence of the environments, 
  and we can instantiate $X$ with $(f ~a_1 \ldots a_n)$.   
  \end{proof}

  \section{Nominal Letrec Matching with Environment Variables}
   
   Extending the language by variables $\Env$ that may encode partial letrec-environments would lead to a larger coverage of
   unification problems    
  in reasoning about the semantics of programming languages.  
  
  \begin{example}
   Consider as an example a rule  (llet-e) that merges letrec environments (see \cite{schmidt-schauss-schuetz-sabel:08}): \\ 
   $(\tletr~\Env_1~\tin~ (\tletr~\Env_2~\tin~ X))~ \to~ (\tletr~\Env_1;\Env_2~\tin~ X))$. \\
   It can be applied to an expression $(\tletr~a.0;b.1~\tin~\tletr~c.(a,b,c)~\tin~c)$ as follows: 
   The left-hand side $(\tletr~\Env_1~\tin~ (\tletr~\Env_2~\tin~ X))$ of the reduction rule   matches 
    $(\tletr~a.0;b.1~\tin~(\tletr~c.(a,b,c)~\tin~c))$ with the match: 
    $\{\Env_1 \mapsto (a.0;b.1); \Env_2 \mapsto c.(a,b,c); X \mapsto c\}$, 
    producing the next expression as an instance of the right hand side $(\tletr~\Env_1;\Env_2~\tin~ X)$, which is: 
    $(\tletr~a.0;b.1;c.(a,b,c) ~\tin~c)$. 
    Note that in the application to extended lambda calculi,
      a bit more care (i.e. a further condition) is needed \wrt scoping in order to get valid reduction results in all cases.
  \end{example}
  
  We will now also have partial environments as syntactic objects.
  
  The grammar for the extended language $\LRLXE$ ({\bf L}et{\bf R}ec {\bf L}anguage e{\bf X}tended with {\bf E}nvironments) is: 
\[\begin{array}{lcl} 
 \env & ::=&  \Env \mid \pi\cdot{}\Env \mid  a.e \mid \env;\env \\ 
 e & ::=&  a \mid X \mid \pi\cdot{}X \mid \lambda a.e \mid (f~e_1~\ldots e_{\ari(c)})~| ~\tletr~\env~\tin~e   
    \end{array}
\]
  
  We define a matching algorithm, where environment variables may
  occur in left hand sides.  This algorithm needs a more expressive
  data structure in equations: a letrec with two
  environment-components, (i) a list of bindings that are already
  fixed in the correspondence to another environment, and (ii) an
  environment that is not yet fixed. We denote the fixed bindings as a
  list, which is the first component.  In the notation we assume that
  the (non-fixed) letrec-environment part on the right hand side may
  be arbitrarily permuted before the rules are applied. The
  justification for this special data structure is the  scoping
  in letrec expressions.
  Note that  suspensions do not occur  in this algorithm.
 
  \begin{definition}\label{def:matchalg-env}
   The matching algorithm  \letrecenvmatch\  for expressions where environment variables $\Env$ and expression variables $X$ 
   may occur only in the left hand sides
  of match equations is described below.
 The rules are:  
 \begin{flushleft}
$\gentzen{\Gamma \dotcup\{e \matcheq e\}}{\Gamma}$ \quad
$\gentzen{\Gamma\dotcup\{(f~s_1 \ldots s_n) \matcheq (f~s_1' \ldots s_n')\}}
   {\Gamma \dotcup \{s_1 \matcheq s_1',\ldots,s_n \matcheq s_n'\}}$ \quad
$\gentzen{\Gamma\dotcup\{\lambda a.s \matcheq \lambda a.t\}}
      {\Gamma \dotcup \{s\matcheq t\}}$ 
      \\[2mm] 
     
     $\gentzent{\Gamma\dotcup \{\lambda a.s\matcheq \lambda b.t\}}
      {\Gamma \dotcup \{s \matcheq (a~b){\cdot}t\}}$   ~if, $a {\freshdot} t$ \mbox{ otherwise Fail}
 \end{flushleft}     
\begin{flushleft}
$\gentzen{\Gamma \dotcup\{(\tletr ~ls;  a.s;\env~\tin~r) \matcheq (\tletr~ ls'; b.t;\env' ~\tin~r') \}}
         {\mathop{|}\limits_{\forall (b.t)}~\Gamma \dotcup\{(\tletr~ ((a.s):ls); \env~\tin~r) \matcheq
           (a~b)(\tletr~ ((b.t):ls';~\env'~\tin~r')\}}$\\[2mm]
           if $a {\freshdot} (\tletr~ ls'; b.t;\env' \tin~r' )$, otherwise Fail.\\[2mm]
 
$\gentzen{\Gamma \dotcup\{(\tletr~ ls;  \Env;\env~\tin~r) \matcheq (\tletr~ ls';  \env_1';\env_2'~\tin~r') \}}
         {\mathop{|}\limits_{\env_1'}~\Gamma \dotcup\{(\tletr~ (\Env:ls); \env~\tin~r)  \matcheq
           (\tletr~ ((env_1'):ls');~\env_2'~\tin~r')\}}$\\[2mm]
           
 $\gentzen{\Gamma \dotcup\left\{
   \begin{array}{l}(\tletr~ ls;\emptyset ~\tin~r) \\ \matcheq
 (\tletr~ ls';\emptyset~ \tin~r')\end{array} \right\}}
         { \Gamma \dotcup\{ls \matcheq ls'; r \matcheq r'\}}$ ~~  
$\gentzen{\Gamma \dotcup\{[e_1;\ldots;e_n] \matcheq [e_1';\ldots;e_n']\}}
         { \Gamma \dotcup\{e_1 \matcheq e_1'; \ldots; e_n \matcheq e_n'\}}$\\[2mm]        
$\gentzen{\Gamma\dotcup \{X \matcheq e_1, X \matcheq e_2\}}
   {\Gamma \dotcup  \{X \matcheq e_1, e_1 \doteq e_2 \}}$   
   \quad  
   $\gentzen{\Gamma\dotcup \{\Env \matcheq \env_1, \Env \matcheq \env_2\}}
   {\Gamma \dotcup  \{\Env \matcheq \env_1, \env_1 \doteq \env_2\}}$ 
 \end{flushleft}       
   
 Testing $e_1 \doteq e_2$ and $\env_1 \doteq \env_2$ is done with high priority using the (nondeterministic) matching rules. 
 \begin{description}
  \item[Clash Failure:] If $s \doteq t \in \Gamma$, $\tops(s)\in \FX, \tops(t) \in \FX$, but
   $\tops(s) \not= \tops(t)$.
\end{description}
 \end{definition}
 
 After successful execution, the  result will be  a set of match equations with components $X\matcheq e$, and $\Env \matcheq \env$,
 which represents a matching substitution.
 
 \begin{theorem}
 The algorithm \ref{def:matchalg-env} (\letrecenvmatch)  is sound and complete. It runs in non-deterministic polynomial time. 
 The corresponding decision problem is NP-complete. 
 The collecting     
  version of \letrecenvmatch\ returns an at most exponentially large, complete set of representations 
 of matching substitutions, where the representations are of at most polynomial size.  
 \end{theorem}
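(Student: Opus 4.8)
The plan is to follow the pattern of the proofs for \letrecunify\ (Theorems \ref{thm:unification-terminates} and \ref{thm:unification-sound-and-complete}), adapting the arguments to the two-component environment data structure and the environment-variable rules, and to establish the four claims — soundness, completeness, the nondeterministic polynomial time bound, and NP-completeness — in turn. For soundness I would argue rule by rule that each transformation preserves the set of ground solutions, reading the letrec rules directly off Definition \ref{def:alpha-equivalence-sim}. The decomposition, trivial-equation and the two $\lambda$-rules are justified exactly as for \letrecmatch. For the atom-binding rule I would check that moving $a.s$ and $b.t$ into the respective fixed lists while applying the swapping $(a~b)$ to the whole right-hand side, under the side condition $a\freshdot(\tletr~ls';b.t;\env'~\tin~r')$, is precisely the step licensed by $\alpha$-equivalence of abstractions together with the permutation-of-bindings clause; the freshness condition is what enforces correct scoping when the binder $a$ is renamed. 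For the environment-variable rule I would note that splitting the right environment into $\env_1'$ and $\env_2'$ and binding $\Env$ to $\env_1'$ merely records one admissible assignment of the partial environment. The empty-environment rule, the list rule, and the two merge rules for $X$ and $\Env$ are routine, and the clash rule is sound by the usual argument.

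For completeness I would use the facts that right-hand sides are ground and (as remarked) no suspensions arise, and show that for every ground solution $\mu$ the nondeterministic choices can be made so as to track $\mu$. The only genuinely disjunctive rules are the atom-binding rule, which guesses the partner $b.t$ for $a.s$ (equivalently the permutation $\rho$ of Definition \ref{def:alpha-equivalence-sim}), and the environment-variable rule, which guesses the portion $\env_1'$ assigned to $\Env$; I would argue that for a solvable state at least one branch of each agrees with $\mu$, and that no failure rule blocks a solvable state. The two merge rules preserve solvability because any solution must satisfy both copies of the equation simultaneously, so the embedded test $e_1\doteq e_2$ (resp.\ $\env_1\doteq\env_2$) cannot discard a genuine solution.

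For the complexity bound I would exhibit a lexicographic termination measure analogous to $(\numberLetrec,\numberSize,\ldots)$: first the number of \tletr-symbols, then the number of not-yet-fixed entries in left-hand environments (atom-bindings $a.s$ together with occurrences of $\Env$), then the total size $\numberSize$, then the number of equations. The atom-binding and environment-variable rules strictly decrease the count of non-fixed left entries while only relocating bindings into the fixed lists; applying a swapping to a ground right-hand side renames atoms without increasing size, so $\numberSize$ remains polynomially bounded and no new \tletr-symbols are created; the empty-environment rule strictly reduces the number of \tletr-symbols; and every remaining rule decreases size or the number of equations. Hence each branch terminates after polynomially many steps, each step runs in polynomial time, intermediate states stay polynomial in size, and a single run is a polynomial-time nondeterministic computation — giving membership in NP and the polynomial-size bound on each representation, while the exponential bound on the collecting version follows from the branching factor of the two disjunctive rules. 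NP-hardness, and hence NP-completeness of the decision problem, is inherited: plain nominal letrec matching is the special case with no environment variables and is already NP-hard by Theorem \ref{thm:matching-NP-hard}.

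The main obstacle I expect is the completeness argument for the two environment-level disjunctive rules in combination with the scoping side conditions. I would have to verify carefully that the fixed-list mechanism, which threads a swapping $(a~b)$ through the entire right-hand expression at each atom-binding step, exactly realizes the simultaneous $\alpha$-renaming and binding-permutation of Definition \ref{def:alpha-equivalence-sim}, and that the freshness conditions neither reject a branch that some solution requires nor admit an unsound one; getting the interaction between the guessed environment split for $\Env$ and the accumulated swappings right is the delicate point.
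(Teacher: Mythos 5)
Your proposal is correct and follows essentially the same route as the paper, whose own proof is only a two-sentence remark that the reasoning is "a variation of previous arguments" with the nonstandard part being the step-by-step fixing of environment correspondences and the scoping; you have simply spelled out that variation in full, including the termination measure and the inheritance of NP-hardness from Theorem \ref{thm:matching-NP-hard}. The delicate point you flag at the end (the interaction of the accumulated swappings with the fixed-list mechanism and the freshness conditions) is exactly the point the paper itself singles out as the only nonstandard part.
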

 \begin{proof}
 The reasoning for soundness, completeness and termination in polynomial time is  a variation of  previous arguments. 
 The nonstandard part is fixing the correspondence of environment parts step-by-step and 
 keeping the scoping.  
 \end{proof}

  \section{Conclusion and Future Research}

 We constructed a nominal letrec unification algorithm, several nominal letrec matching algorithms for variants,
 which all run in nondeterministic polynomial time.
 Future research is to investigate  extensions with environment variables $\Env$, 
 and  to investigate nominal matching together with equational theories.

\end{document}